% two-column:
\documentclass[journal]{IEEEtran}
\usepackage{ifthen}
\newboolean{draft}
\setboolean{draft}{false} 

% for initial submission:
%\documentclass[12pt,draftcls,onecolumn]{IEEEtran} 
%\usepackage{ifthen}
%\newboolean{draft}
%\setboolean{draft}{true} 

% correct bad hyphenation here
\hyphenation{op-tical net-works semi-conduc-tor}

\usepackage{schoolbook}
\usepackage{amsmath}                
\usepackage{amssymb}  
\usepackage{psfrag,graphicx}
\usepackage{epsfig}

\newtheorem{thm}{Theorem}
\newtheorem{lem}{Lemma}

\newtheorem{cor}{Corollary}
   
\newtheorem{rem}{Remark}

\newcommand{\E}[1]{\mathrm{E}(#1)}
\def\tr{\mathrm{tr}\,}

\newcommand{\Hardy}[1]{{\mathcal H_#1}}

\newcommand{\Lebesgue}[1]{{\mathcal L_#1}}
\newcommand{\Htwo}{{\Hardy{2}}}

\newcommand{\Ltwo}{{\Lebesgue{2}}}
\newcommand{\Lone}{{\Lebesgue{1}}}

\newcommand{\Rational}{{\mathcal R}}

\newcommand{\RHinf}{{\Rational \Hardy{\infty}}}

\newcommand{\RLone}{{\Rational \Lebesgue{1}}}
\newcommand{\RLinf}{{\Rational \Lebesgue{\infty}}}

\newcommand{\Smirnov}{\mathcal{N}^+}

\newcommand{\re}{\operatorname{Re}}
\newcommand{\im}{\operatorname{Im}}

\newcommand{\eqdef}{\overset{\underset{\mathrm{def}}{}}{=}}

\newcommand{\R}{\mathbb{R}}

\newcommand{\T}{\mathbb{T}}

\newcommand{\norm}[1]{\left\|{#1}\right\|}

\newcommand{\intT}{\int_{-\pi}^{\pi}}

\newcommand{\sys}[4]{
  \left[%\lgroup
    \begin{array}{c|c}
      #1 & #2\\\hline
      #3 & #4
    \end{array}
  \right]%\rgroup  
}

\begin{document}
\title{Optimal Linear Control over Channels with Signal-to-Noise Ratio Constraints}

\author{Erik~Johannesson, 
        Anders~Rantzer,~\IEEEmembership{Fellow,~IEEE,} and~Bo~Bernhardsson% <-this % stops a space
\thanks{Submitted to the IEEE Transactions on Automatic Control on March 30th 2012.}%
\thanks{This work was supported by the Swedish Research Council through the Linnaeus Center LCCC; the European Union's Seventh Framework Programme under grant agreement number 224428, project acronym CHAT; and the ELLIIT Strategic Research Center.}% <-this % stops a space
\thanks{The authors are with the Department
of Automatic Control, Lund University, Box 118, 221 00 Lund, Sweden. 
Phone: +46 46 222 87 87. 
Fax: +46 46 13 81 18. 
E-mail: \{erik, rantzer, bob\}@control.lth.se.}%
\thanks{\copyright 2012 IEEE. Personal use of this material is permitted. Permission from IEEE must be obtained for all other uses, in any current or future media, including reprinting/republishing this material for advertising or promotional purposes, creating new collective works, for resale or redistribution to servers or lists, or reuse of any copyrighted component of this work in other works.}}

\maketitle

\begin{abstract}
We consider a networked control system where a linear time-invariant (LTI) plant, subject to a stochastic disturbance, is controlled over a communication channel with colored noise and a signal-to-noise ratio (SNR) constraint. The controller is based on output feedback and consists of an encoder that measures the plant output and transmits over the channel, and a decoder that receives the channel output and issues the control signal. The objective is to stabilize the plant and minimize a quadratic cost function, subject to the SNR constraint. 

It is shown that optimal LTI controllers can be obtained by solving a convex optimization problem in the Youla parameter and performing a spectral factorization. 
The functional to minimize is a sum of two terms: the first is the cost in the classical linear quadratic control problem and the second is a new term that is induced by the channel noise. %todo ta bort meningen?

A necessary and sufficient condition on the SNR for stabilization by an LTI controller follows directly from a constraint of the optimization problem. It is shown how the minimization can be approximated by a semidefinite program. The solution is finally illustrated by a numerical example.
\end{abstract}

\begin{IEEEkeywords}
ACGN channel, control over noisy channels, linear-quadratic-Gaussian control, networked control systems, signal-to-noise ratio (SNR)
\end{IEEEkeywords}

\IEEEpeerreviewmaketitle

\section{Introduction}\label{sec:introduction}
\IEEEPARstart{C}{ommunication} limitations are a fundamental characteristic of networked control. The recent trend of decentralized and large-scale systems has therefore driven an interest in research on how control systems are affected by communication phenomena such as random time delays, packet losses, quantization and noise. A popular approach used for research on the fundamental aspects of communication limitations in control systems is to consider a plant that is controlled over a communication channel, as depicted in Fig.~\ref{fig:loop_general}. Due to the lack of a theory that can handle all limitations at once, the channel model is typically chosen to highlight a specific issue.

\begin{figure}[tb]
  \begin{center}
\ifthenelse{\boolean{draft}}
{\includegraphics[width=.45\hsize]{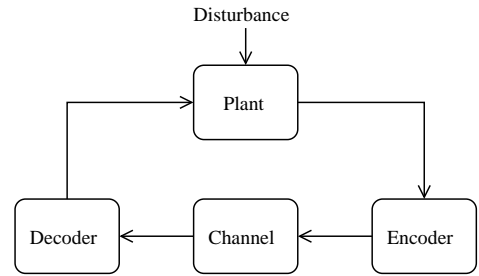}} 
{\includegraphics[width=.7\hsize]{loop_general}}   
    \caption{Feedback control of a plant, with disturbance, over a communication channel. The control system consists of an encoder, which also does measurement filtering, and a decoder, which also determines the control signal.} 
    \label{fig:loop_general}
  \end{center}
\end{figure}

One example is given by data-rate constraints, whose study has led to one of the most well-known results in the area, known as the data-rate theorem. 
In discrete time, it says that an unstable linear plant $G$ can be stabilized over a digital error-free channel if and only if 
\begin{equation}\label{eq:data_rate_theorem}
  \mathcal{R} > \sum_{i} \max \left\lbrace 0,\log_2 \left |\lambda_i(G)\right | \right\rbrace \eqdef \mathcal{H}_G,
\end{equation}
where $\mathcal{R}$ is the rate of the channel and $\lambda_i(G)$ is the ith pole of $G$ \cite{Nair2003, tatikonda04a, Nair2004}.

The situation is more complicated for noisy channels. A necessary and sufficient condition for almost sure asymptotic stabilizability is that the channel capacity $\mathcal{C}$ satisfies $\mathcal{C} > \mathcal{H}_G$ \cite{Matveev2004}.
But since this condition is not generally sufficient for mean-square stability, the concept of any-time capacity has been proposed for characterization of moment stabilizability \cite{sahai06a}. 

Stability is, however, easier to characterize for control of a linear time-invariant (LTI) plant over an additive white Gaussian noise (AWGN) channel or, more generally, an additive colored Gaussian noise (ACGN) channel. Since the communication aspect highlighted by this channel model is a Signal-to-Noise Ratio (SNR) constraint, this setting will be referred to as the SNR framework.

The SNR framework is mainly attractive due to its simplicity. However,
it has been argued that 
the usage of linear controllers admits application of established performance and robustness tools \cite{braslavsky07}. 
The obtained results can sometimes also be used to draw conclusions about and design controllers for other communication limitations such as packet drops \cite{Silva2009} or rate limitations \cite{Silva2010b, silva2010c}. 
Moreover, the SNR framework can be useful for applications such as power control in mobile communication systems \cite{SilvaSNRApproach}. 

\subsection{Main Result and Outline}
This paper considers the problem of control design in the SNR framework. The system has the structure illustrated in Fig. \ref{fig:loop_general}. The plant is LTI, possibly unstable and subject to a Gaussian disturbance. The controller is based on output feedback and consists of an encoder and a decoder. The encoder measures the plant output, filters the measurements and encodes them for transmission over an ACGN channel. The decoder receives the channel output, decodes it and determines the control signal. The objective of the controller is to stabilize the system and minimize a quadratic cost function, while satisfying the SNR constraint. 

The main result is that an optimal LTI output feedback controller can be obtained by minimizing a convex functional and performing a spectral factorization. The minimization is performed over the Youla parametrization of the product of the encoder and the decoder, and the functional is the sum of the classical LQG cost and a new term that is induced by the channel noise. A condition for stabilizability, which coincides with the previously known condition in the AWGN case, is obtained as a constraint of the minimization problem. It is shown how to formulate the minimization as a semidefinite program.
As a by-product of the main result, it is also shown how the encoder and decoder should be chosen in order to minimize the impact of the channel noise while preserving the closed loop transfer function given by a nominal LTI controller that has been designed for a classical feedback system.

The rest of this section will present the previous research in the SNR framework. Section~\ref{sec:notation} presents the mathematical notation used in this paper. The exact problem formulation is given in Section~\ref{sec:problemformulation}. Section~\ref{sec:solution} is devoted to the solution of the problem. Section~\ref{sec:numerics} presents a procedure for numerical solution and a numerical example. Finally, Section~\ref{sec:conclusion} presents the conclusions and discusses further research. Some technical lemmas have been put in the appendix.

\subsection{Previous Research}
Necessary and sufficient conditions for stabilizability, similar to the data-rate theorem, have been found for the SNR framework. They do, however, vary depending on some of the assumptions. Generally, the condition for the AWGN channel is that the SNR $\sigma^2$ satisfies the inequality
\begin{equation}\label{eq:SNR_framework_stabilizability}
  \sigma^2> \left (\prod_i |\max \lbrace 1, \lambda_i(G) \rbrace |^2\right )-1+\eta+\delta,
\end{equation}
where $\eta$ and $\delta$ depend on the specific assumptions, as will be explained.
Assuming no plant disturbance and static state feedback, the condition is that (\ref{eq:SNR_framework_stabilizability}) holds with $\eta=\delta=0$. By writing (\ref{eq:data_rate_theorem}) and (\ref{eq:SNR_framework_stabilizability}) in terms of the respective channel capacities, it can be shown that the capacity requirements for stabilization in the two settings are equal \cite{braslavsky07}.

For LTI output feedback, the condition is again (\ref{eq:SNR_framework_stabilizability}) but now the terms $\eta$ and $\delta$ are non-negative and depend on the non-minimum phase zeros and the relative degree of the plant, respectively \cite{braslavsky07}. If there is a plant disturbance, the same condition holds if  the controller has two degrees of freedom (DOF) \cite{silva2010a} but not if it only has one DOF, in which case the required SNR is larger \cite{Rojas2010}. The condition with $\eta=\delta=0$ can be recovered for the output feedback case, either by introducing channel feedback, meaning that the encoder knows the channel output \cite{silva2010a}, or by allowing a time-varying controller, although the latter leads to poor robustness and sensitivity \cite{braslavsky07}. Further, it has been shown that the condition (\ref{eq:SNR_framework_stabilizability}) with $\eta=\delta=0$ is necessary for stabilizability even if nonlinear and time varying state feedback controllers are allowed \cite{freudenberg10}. 

Similar conditions have been found for LTI control of a plant with no disturbance over an ACGN channel \cite{Rojas2010}. The case with first order moving average channel noise was further analyzed in \cite{middleton2009}.

An early formulation of a feedback control problem over an AWGN channel with feedback
was made in \cite{bansal89}. It was shown how to find the optimal linear controller and that it is globally optimal for first-order plants. A counterexample was provided, showing that non-linear solutions may outperform linear ones for higher order plants \cite{bansal89}. It should, however, be noted that the provided counterexample requires the plant to be time-varying and that the encoder has a memory structure where it does not remember past plant output.

Since then, many authors have considered similar control design problems that have been simplified by assumption of a certain controller structure, see \cite{freudenberg07, goodwin08b, breun2008, li2009, pulgar2011}. A quite general approach was proposed in \cite{silva2010a}, but it was also noted that it leads to a difficult optimization problem with sparsity constraints when it is applied to controllers with two degrees of freedom.

The problem of optimizing the control performance at a given terminal time was considered in \cite{freudenberg08} and \cite{freudenberg09}. The solutions may however yield poor transient performance and can therefore be unsuitable for closed-loop control.

A lower bound on the variance of the plant state was obtained for feedback control over AWGN channels,  using general controllers with two degrees of freedom, in \cite{freudenberg10}. This bound tends to infinity as the SNR approaches the limit for when stabilization is possible. 

An important contribution was recently made in \cite{silva2010c}. Although the paper mainly considers control over a rate-limited channel, this is done through design of an LTI output feedback controller in the SNR framework, assuming an AWGN channel with feedback. The optimal performance is shown to be obtained by solving a convex optimization problem with the same structure as the one obtained in this paper. An optimal controller is then acquired by finding rational transfer functions that approximate certain frequency responses. Related results, for when the controller is pre-designed and the coding system should have unity transfer function, are given in \cite{Silva2010b} and \cite{silva2011}.

Comparing with the results presented here, the case without channel feedback is not mentioned in \cite{silva2010c}. The presented convex functional that gives the optimal cost for the case with channel feedback can, however, be modified to give the optimal cost for this case as well. The expressions for the optimal transfer functions that are given can, with additional work, also be modified to give solutions to the case without channel feedback.

We claim that the solution presented in this paper has a clearer structure than the one given in \cite{silva2010c}. For example, we do not require an over-parametrization of the controller. Moreover, while the plant is assumed to be single-input single-output (SISO) in \cite{silva2010c}, it is here allowed to be slightly more general, making it possible to include any number of noise and reference signals and to penalize the control signal variance. Also, we allow the channel noise to be colored. A final contribution of this paper relative to \cite{silva2010c} is that it is shown how to pose the optimization problem as a semidefinite program.

The approach used in this paper is based on the solution of a communication-theoretic problem involving design of encoders and decoders for a Gaussian source and a Gaussian channel when there is a delay constraint~\cite{joh12IT}. 
Some instances of that problem can be viewed as special cases (open loop versions) of the problem considered here, which therefore may be viewed as a partial generalization.

\section{Notation}\label{sec:notation}
The proofs in this paper make extensive use of concepts from functional analysis, such as $\Lebesgue{p}$ (Lebesgue), $\Hardy{p}$ (Hardy) and $\Smirnov$ (Smirnov) function classes and inner-outer factorizations. To conserve space, only some of the most important facts will be given here. The interested reader is referred either to \cite{joh11phd} or to \cite{Garnett}, \cite{rudin86real} and \cite{Inouye} for the remaining relevant definitions and theorems.

The natural logarithm is denoted $\log$.
The complex unit circle is denoted by $\T$.
For matrix-valued functions $X(z), Y(z)$ defined on $\T$, define 
\begin{equation*}
  \langle X, Y \rangle = \frac{1}{2\pi}\intT \tr\left(X(e^{i\omega})^* Y(e^{i\omega})\right) d\omega
\end{equation*}
and the norms
\begin{align*}
  \norm{X}_1 &= \frac{1}{2\pi} \intT \tr \sqrt{X(e^{i\omega})^*X(e^{i\omega})} \ d\omega \\
  \norm{X}_2 &= \left(\frac{1}{2\pi} \intT \norm{X(e^{i\omega})}_F^2\ d\omega\right)^{1/2} \\
  \norm{X}_\infty &= \text{ess} \sup_{\omega} \sigma_1 \left( X(e^{i\omega})\right),
\end{align*}
where $\norm{\cdot}_F$ is the Frobenius norm and $\sigma_1$ the largest singular value.

A transfer matrix $X(z)$ is said to be proper if the mapping $z\mapsto X(1/z)$ is analytic at $0$. It is strictly proper if also $\lim_{z\rightarrow\infty} X(z) =0$.
The space of all rational and proper transfer matrices with real coefficients is denoted by~$\Rational$.

$\Lebesgue{p}$, for $p=1,2,\infty$, is the space of matrix-valued functions $X(z)$, defined on $\T$, that satisfy $\norm{X}_p < \infty$. The subspaces $\Rational \Lebesgue{p}$ consists of all real, rational and proper transfer matrices with no poles on $\T$. 

$\Hardy{p}$, for $p=1,2,\infty$, is the space of matrix-valued functions $X(z)$ such that $z\mapsto X(1/z)$ is analytic on the open unit disk and
  \[\sup_{r>1} \norm{X_r}_p < \infty,\]
  where $X_r(z) = X(rz)$. 
  The subspaces $\Rational \Hardy{p}$ consists of all real, rational, stable and proper transfer matrices. 
When a function in $\Hardy{p}$ is evaluated on $\T$, it is to be understood as the radial limit $\lim_{r\rightarrow 1^+} X(rz)$. 

The arguments of transfer matrices will often be omitted when they are clear from the context.
Equalities and inequalities involving functions %in $\Lebesgue{p}$
evaluated on $\T$ are to be interpreted as holding almost everywhere
on $\T$. 

\section{Problem Formulation}\label{sec:problemformulation}
\begin{figure}[tb]
  \psfrag{C}[]{$C$}
  \psfrag{D}[]{$D$}
  \psfrag{G}[]{$G$}
  \psfrag{v}[]{$v$}
  \psfrag{y}[]{$y$}
  \psfrag{r}[]{$r$}
  \psfrag{t}[]{$t$}
  \psfrag{n}[]{$n$}
  \psfrag{u}[]{$u$}
  \psfrag{w}[]{$w$}
  \psfrag{z}[]{$z$}
  \psfrag{H}[]{$H$}
  \begin{center}
\ifthenelse{\boolean{draft}}
{\includegraphics[width=0.45\hsize]{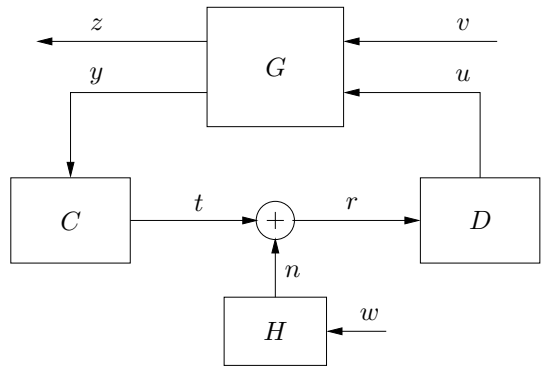}} 
{\includegraphics[width=0.8\hsize]{loop}}   
    
    \caption{Feedback system with ACGN communication channel. The objective is to design the controller components $C$ and $D$ so that the plant $G$ is stabilized and the variance of $z$ is minimized under the SNR constraint $\E{t^2} \leq \sigma^2$. $H$ is a spectral factor of the channel noise $n$.} 
    \label{fig:control_block_diagram}
  \end{center}
\end{figure}

A detailed block diagram representation of the system is shown in Fig.~\ref{fig:control_block_diagram}. The plant $G$ is a multi-input multi-output (MIMO) LTI system with state space realization
\begin{equation*}
  G(z)=\begin{bmatrix} G_{zv}(z) & G_{zu}(z) \\ G_{yv}(z) & G_{yu}(z) \end{bmatrix} =
  \sys{A}
  {
    \begin{array}{cc}B_1 & B_2\end{array}
  }
  {
    \begin{array}{c}C_1 \\ C_2\end{array}
  }
  {
    \begin{array}{cc}D_{11} & D_{12} \\ D_{21} & 0 \end{array}
  },
\end{equation*}
where $(A,B_2)$ is stabilizable and $(C_2,A)$ is detectable. The signals $v$ and $z$ are vector-valued with $n_v$ and $n_z$ elements, respectively. All other signals are scalar-valued. Accordingly, $G_{zv}$ is $n_z \times n_v$, $G_{yv}$ is $1 \times n_v$, $G_{zu}$ is $n_z \times 1$ and $G_{yu}$ is scalar and strictly proper. It is assumed that $G_{zu}^*G_{zu}$ and $G_{yv}G_{yv}^*$ have no zeros or poles on $\T$.

The input $v$ is used to model exogenous signals such as load disturbances, measurement noise and reference signals. It is assumed that $v$ and $w$ are mutually independent white noise sequences with zero mean and identity variance. The other signals in Fig.~\ref{fig:control_block_diagram} are the channel noise $n$, the control signal $u$, the measurement $y$ and the control error or performance index $z$. 

The feedback system is said to be internally stable if no additive injection of a finite-variance stochastic signal at any point in the block diagram leads to another signal having unbounded variance. This is true if and only if all closed loop transfer functions are in $\Htwo$.

The communication channel is an ACGN\footnote{Since only linear controllers are considered, it does not matter if $n$ or $v$ are Gaussian or not. Linear solutions may, of course, be more or less suboptimal depending on their distributions.} channel with SNR $\sigma^2 > 0$. The channel noise has the spectral factor $H\in\RHinf$, which is assumed to have no zeros on~$\T$. 
Since the channel input and output can be scaled by $C$ and $D$, it can be assumed that $n$ has unit variance and thus $\norm{H}_2^2=1$ without loss of generality. The SNR constraint is then equivalent to the power constraint
\begin{equation}\label{eq:channelconstraintintime}
\E{t^2} \leq \sigma^2.
\end{equation}

The objective is to find causal LTI systems $C$ and $D$ that make the system internally stable, satisfy the constraint (\ref{eq:channelconstraintintime}) and minimize 
  $\E{z^T z}$ in stationarity.

By expressing $z$ and $t$ in terms of the transfer functions in Fig.~$\ref{fig:control_block_diagram}$, the objective and the SNR constraint can be written as
\begin{equation*}
  J(C,D) = \norm{G_{zv}+\frac{DCG_{zu}G_{yv}}{1-DCG_{yu}}}_2^2 + \norm{\frac{DHG_{zu}}{1-DCG_{yu}}}_2^2 %= \lim_{k\rightarrow\infty} \E{z(k)^Tz(k)}
\end{equation*}
and
\begin{equation}\label{eq:control_SNR_constraint}
  \norm{\frac{CG_{yv}}{1-DCG_{yu}}}_2^2 + \norm{\frac{DCHG_{yu}}{1-DCG_{yu}}}_2^2 \leq \sigma^2, 
\end{equation}
respectively. The main problem of this paper is thus to minimize $J(C,D)$ over $C$ and $D$ subject to (\ref{eq:control_SNR_constraint}) and internal stability of the feedback system.

For technical reasons, only solutions where the product $DC$ is a rational transfer function will be considered. This may exclude the possibility of achieving the minimum value, but the infimum can still be arbitrarily well approximated by such functions. Since $D$ and $C$ are required to be proper, $DC$ has to be proper as well. That is, $DC\in\Rational$. Though the latter will be enforced, it is not explicitly required that $C$ and $D$ are proper. It will, however, be seen that the solution is constructed so that $C\in\Htwo$ is outer. Then $C,C^{-1}$ are proper, and $D=(DC)C^{-1}$ is also proper.

\section{Optimal Linear Control} \label{sec:solution}
The solution of the problem presented in the previous section is divided into three subsections. The first characterizes internal stability of the system. The second introduces the optimal factorization of a given nominal controller. The third section shows that the optimal factorization result can be used to find an equivalence between the main problem and the minimization of a convex functional in the Youla parameter.

\subsection{Internal Stability}
The product $DC$ will play an important role in the solution. Therefore, introduce 
\begin{equation*}
  K=DC.
\end{equation*}

\begin{figure}[bt]
  \psfrag{C}[]{$C$}
  \psfrag{D}[]{$D$}
  \psfrag{G}[]{$G_{yu}$}
  \psfrag{v}[]{$w_1$}
  \psfrag{y}[]{$y$}
  \psfrag{r}[]{$r$}
  \psfrag{t}[]{$t$}
  \psfrag{n}[]{$n$}
  \psfrag{u}[]{$u$}
  \psfrag{w}[]{$w_2$}
  \begin{center}
\ifthenelse{\boolean{draft}}{\includegraphics[width=0.45\hsize]{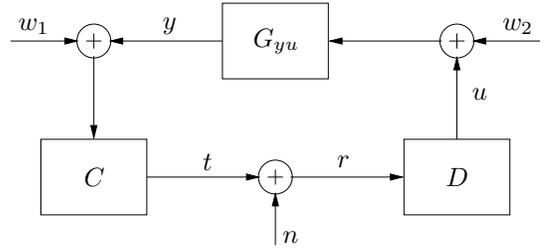}}{\includegraphics[width=0.8\hsize]{loop_stability}}
    \caption{Block diagram for internal stability analysis.} 
    \label{fig:loop_stability}
  \end{center}
\end{figure}
Following the same reasoning as in \cite{zhou96}, it is concluded that internal stability of the systems in Fig.~\ref{fig:control_block_diagram} and Fig.~\ref{fig:loop_stability} are equivalent ($H$ does not have to be included since it is open-loop stable and not part of the feedback loop). The latter can be represented by the closed loop map $T$, defined by
\begin{equation*}
  \begin{bmatrix}
    y \\ t \\ u
  \end{bmatrix} = T
  \begin{bmatrix}
    w_1 \\ w_2 \\ n 
  \end{bmatrix}.
\end{equation*}
Hence, the system in Fig.~\ref{fig:control_block_diagram} is internally stable if and only if
\begin{equation}\label{eq:internal_stability}
  T =
  \begin{bmatrix}
    \dfrac{KG_{yu}}{1-KG_{yu}} & \dfrac{G_{yu}}{1-KG_{yu}} & \dfrac{DG_{yu}}{1-KG_{yu}} \\%[1mm]
    \dfrac{C}{1-KG_{yu}} &\dfrac{CG_{yu}}{1-KG_{yu}} & \dfrac{KG_{yu}}{1-KG_{yu}} \\%[1mm]
    \dfrac{K}{1-KG_{yu}} &\dfrac{KG_{yu}}{1-KG_{yu}} & \dfrac{D}{1-KG_{yu}} 
  \end{bmatrix} \in \Htwo.
\end{equation}

The following two lemmas will give necessary and sufficient conditions for internal stability, respectively.
\begin{lem}\label{lemfnfNecessaryForStability}
  Suppose that $T\in\Htwo$, $G_{yu}=NM^{-1}$ is a coprime factorization over $\RHinf$ and that $U,V \in \RHinf$ satisfy the Bezout identity $VM+UN=1$. Then 
  \begin{align}\label{eqfnfKStabilizing}
    K=\frac{MQ-U}{NQ+V}, \quad Q \in \RHinf.
  \end{align}
\end{lem}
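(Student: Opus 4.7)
\proofheader{Proof plan.}
The plan is to deduce from $T\in\Htwo$ that the standard ``gang of four'' sensitivity functions associated with $K$ and $G_{yu}$ lie in $\RHinf$, and then algebraically extract the Youla parameter using the Bezout identity. To begin, note that the entries $T_{11}=SKG_{yu}$, $T_{12}=SG_{yu}$, $T_{31}=SK$, and $T_{32}=SKG_{yu}$ (where $S:=(1-KG_{yu})\inv$) depend only on $K$ and $G_{yu}$, both of which are rational; hence these four functions are rational, and being in $\Htwo$ they are in fact in $\RHinf$, since a proper rational function with a pole on $\T$ cannot be $\Ltwo$-integrable. Combining $S=1+SKG_{yu}$ then gives $S\in\RHinf$ as well. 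Substituting $G_{yu}=NM\inv$ and setting $\Delta:=(M-KN)\inv$, these four functions can be rewritten as $M\Delta,\ N\Delta,\ KM\Delta,\ KN\Delta\in\RHinf$.

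Next, I would apply the Bezout identity $VM+UN=1$ twice. Multiplying it on the right by $\Delta$ gives $\Delta=V(M\Delta)+U(N\Delta)\in\RHinf$; multiplying on the right by $K\Delta$ and using SISO commutativity yields $K\Delta=V(KM\Delta)+U(KN\Delta)\in\RHinf$. So both $\Delta$ and $K\Delta$ are stable, and therefore
\[Q:=(U+KV)\Delta = U\Delta + VK\Delta \in\RHinf.\]

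Finally, a direct computation, using $(M-KN)\Delta=1$ together with scalar commutativity, gives
\[MQ-U-K(NQ+V)=(M-KN)Q-(U+KV)=(U+KV)-(U+KV)=0,\]
which rearranges to $K=(MQ-U)/(NQ+V)$ with $Q\in\RHinf$, as claimed. The main subtlety I anticipate is the step passing from rational $\Htwo$ to $\RHinf$ (needed so that the Bezout manipulations stay inside $\RHinf$); once this is in hand, the remainder of the argument is straightforward bookkeeping with the Bezout identity and the rational structure of $G_{yu}$ and $K$.
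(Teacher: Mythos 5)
Your proposal is correct and follows essentially the same route as the paper: extract from $T\in\Htwo$ the classical feedback ("gang of four") transfer functions, use rationality to upgrade membership in $\Htwo$ to $\RHinf$, and then invoke the Youla parametrization of all stabilizing $K$. The only difference is that where the paper simply cites the Youla parametrization from the reference of Zhou, Doyle and Glover, you re-derive it explicitly via $Q=(U+KV)(M-KN)\inv$ (and your rearrangement to $K=(MQ-U)/(NQ+V)$ is legitimate since $NQ+V=(M-KN)\inv$ is not identically zero), which is a harmless, self-contained expansion of the cited step.
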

\begin{proof}
  It follows directly from (\ref{eq:internal_stability}) that 
  \begin{gather*}
    \frac{G_{yu}}{1-KG_{yu}} \in \Htwo, \quad \frac{K}{1-KG_{yu}} \in \Htwo, \quad \frac{1}{1-KG_{yu}} \in \Htwo.
  \end{gather*}
  These transfer functions are rational and have no poles on or outside the unit circle, so it follows that 
  \begin{equation}\label{eqfnfClassicalMapStable}
    \begin{bmatrix}
      1 & -K \\ -G_{yu} & 1
    \end{bmatrix}^{-1} =
    \begin{bmatrix}
      \dfrac{1}{1-KG_{yu}} & \dfrac{K}{1-KG_{yu}} \\
      \dfrac{G_{yu}}{1-KG_{yu}} & \dfrac{1}{1-KG_{yu}}
    \end{bmatrix} \in \RHinf,
  \end{equation}
  The set of $K$ satisfying (\ref{eqfnfClassicalMapStable}) can be parametrized using the Youla parametrization of all stabilizing controllers \cite{zhou96}. That is, $K$ can be written as in (\ref{eqfnfKStabilizing}).
\end{proof}

\begin{lem}\label{lemfnfSufficientForStability}
  Suppose that 
  \begin{equation}\label{eqfnfKstab2}
    K=DC = \frac{MQ-U}{NQ+V}, \quad Q \in \RHinf,
  \end{equation}
  where $G_{yu}=NM^{-1}$ is a coprime factorization over $\RHinf$ and $U,V \in \RHinf$ satisfy the Bezout identity $VM+UN=1$. Suppose also that $C\in\Htwo$ is outer and that $D\in\Ltwo$.
  Then $T\in\Htwo$.
\end{lem}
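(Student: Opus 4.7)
\proofheader{Proof plan.}
The plan is to verify that each of the nine entries of the matrix $T$ in~(\ref{eq:internal_stability}) lies in $\Htwo$, splitting them into three groups according to whether $C$, $D$, or neither appears in the numerator. The first and easiest group consists of the six entries that depend on $K$ and $G_{yu}$ only. The hypothesis that $K=(MQ-U)/(NQ+V)$ with $Q\in\RHinf$ is precisely the Youla parametrization of the set of $K$ for which the classical feedback map~(\ref{eqfnfClassicalMapStable}) lies in $\RHinf$; I would appeal to this parametrization (the converse of the reasoning in Lemma~\ref{lemfnfNecessaryForStability}) to obtain $(1-KG_{yu})^{-1}$, $G_{yu}(1-KG_{yu})^{-1}$, $K(1-KG_{yu})^{-1}$, $KG_{yu}(1-KG_{yu})^{-1}\in\RHinf$. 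That immediately handles the $(1,1)$, $(1,2)$, $(2,3)$, $(3,1)$, $(3,2)$ entries and, for the analysis further below, gives $\RHinf$ multipliers for the remaining entries.

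The second group is the two $(2,\cdot)$ entries whose numerator contains $C$. Since $C\in\Htwo$ and the multipliers $1/(1-KG_{yu})$ and $G_{yu}/(1-KG_{yu})$ are in $\RHinf\subset\Hinf$, the standard inclusion $\Hinf\cdot\Htwo\subset\Htwo$ immediately gives $(2,1),(2,2)\in\Htwo$.

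The remaining two entries $(1,3)=DG_{yu}/(1-KG_{yu})$ and $(3,3)=D/(1-KG_{yu})$ are where the argument has to work, since $D$ is assumed only to lie in $\Ltwo$, not $\Htwo$. Here I would exploit the identity $D=K/C$, which is meaningful because $C$ is outer and hence invertible as a meromorphic function. Rewriting, for example, $(3,3) = [K/(1-KG_{yu})]\cdot(1/C)$ exhibits it as the product of an $\RHinf$ function and $1/C$. Since $C\in\Htwo$ is outer, $1/C\in\Smirnov$, and the product of an $\RHinf$ function with a Smirnov function is again in $\Smirnov$. On the other hand, viewing the same entry as $D\cdot[1/(1-KG_{yu})]$ and using $D\in\Ltwo$ together with the $\Linf$ multiplier gives membership in $\Ltwo$. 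Invoking the Smirnov characterization $\Htwo=\Smirnov\cap\Ltwo$, the entry is in $\Htwo$; the identical reasoning, with the $\RHinf$ function $KG_{yu}/(1-KG_{yu})$ in place of $K/(1-KG_{yu})$, handles $(1,3)$.

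The step I expect to carry the weight of the argument is this last one: the fact that $D$ is \emph{a priori} only square integrable is insufficient on its own, and the key non-obvious ingredient is the outer-times-Smirnov structure enforced by $K=DC\in\Rational$ together with $C$ outer, which upgrades $\Ltwo$ membership to $\Htwo$ membership via the Smirnov/Nevanlinna theorem. The other six entries and the two $C$-entries are essentially bookkeeping once the Youla parametrization is in place.
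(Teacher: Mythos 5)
Your proposal is correct and follows essentially the same route as the paper: the $K$-only entries via the Youla parametrization, the $C$-entries via $\Htwo\cdot\RHinf\subset\Htwo$, and the $D$-entries by writing them as an $\RHinf$ function times $C^{-1}$ and upgrading $\Ltwo$ membership to $\Htwo$ through the Smirnov class. The only difference is that you unpack inline the argument ($C^{-1}\in\Smirnov$, products of Smirnov functions, $\Ltwo\cap\Smirnov=\Htwo$) which the paper isolates as Lemma~\ref{lem:product_with_inverse_in_Hp} in the appendix.
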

\begin{proof}
  It follows from (\ref{eqfnfKstab2}) that
  \begin{gather*}
    \frac{G_{yu}}{1-KG_{yu}} \in \RHinf, \quad \frac{K}{1-KG_{yu}} \in \RHinf, \\ \frac{KG_{yu}}{1-KG_{yu}} = \frac{1}{1-KG_{yu}} -1 \in \RHinf.
  \end{gather*}
  Moreover,
  \begin{equation*}
    \frac{DG_{yu}}{1-KG_{yu}} =\frac{KG_{yu}}{1-KG_{yu}}C^{-1},
  \end{equation*}
  where the left hand side is in $\Ltwo$ since it is the product of an $\Ltwo$ function and a $\RHinf$ function. Since $C$ is outer, application of Lemma \ref{lem:product_with_inverse_in_Hp} (in the appendix)  gives that the right hand side is in $\Htwo$. A similar argument shows that 
  \begin{equation*}
    \frac{D}{1-KG_{yu}}\in\Htwo.
  \end{equation*}
  Finally,
  \begin{equation*}
    \frac{C}{1-KG_{yu}} \in \Htwo, \quad \frac{CG_{yu}}{1-KG_{yu}} \in \Htwo,
  \end{equation*}
  since these functions are products of an $\Htwo$ function and an $\RHinf$ function.
  Since $\RHinf \subseteq \Htwo$ it has been proved that all elements of $T$ are in $\Htwo$ and so $T\in\Htwo$.
\end{proof}

\subsection{Optimal Factorization}
Suppose for now that $K\in\Rational$ is a given stabilizing controller for the classical feedback system in Fig.~\ref{fig:control_block_diagram_classical}. Thus, $K$ satisfies (\ref{eqfnfKStabilizing}). Nothing else is assumed about the design of $K$. It could for example be the $\Htwo$ optimal controller or have some other desirable properties in terms of step responses or closed loop sensitivity. 

\begin{figure}[tb]
  \psfrag{C}[]{$K$}
  \psfrag{G}[]{$G$}
  \psfrag{v}[]{$v$}
  \psfrag{y}[]{$y$}
  \psfrag{u}[]{$u$}
  \psfrag{w}[]{$w$}
  \psfrag{z}[]{$z$}
  \begin{center}
\ifthenelse{\boolean{draft}}{\includegraphics[width=0.4\hsize]{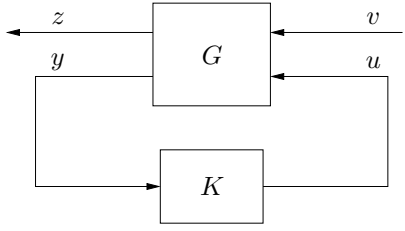}}{\includegraphics[width=0.6\hsize]{loop_classical}}
    \caption{Classical feedback system without communication channel.} 
    \label{fig:control_block_diagram_classical}
  \end{center}
\end{figure}

In either case, it is a natural question to ask what the best way is to implement this controller in the architecture of Fig.~\ref{fig:control_block_diagram}. If the nominal design is to be preserved then $C$ and $D$ should satisfy $K=DC$ since the transfer matrix from $v$ to $z$ would then be the same. Given this relationship, choosing $C$ and $D$ can be thought of as factorizing $K$. The factorization should be chosen to minimize the negative effect of the communication channel. That is, they should keep the system stable, satisfy the SNR constraint and minimize the impact of the channel noise. That is, to minimize the contribution of $n$ to $\E{z^T z}$.

Rewriting $J(C,D)$ and the SNR constraint (\ref{eq:control_SNR_constraint}) with $DC$ replaced by $K$ gives
\begin{equation}\label{eq:y_with_K_fixed}
  J(C,D)=\norm{G_{zv}+\frac{KG_{zu}G_{yv}}{1-KG_{yu}}}_2^2 + \norm{\frac{DHG_{zu}}{1-KG_{yu}}}_2^2
\end{equation}
and
\begin{equation}\label{eq:t_with_K_fixed}
  \norm{\frac{CG_{yv}}{1-KG_{yu}}}_2^2 + \norm{\frac{KHG_{yu}}{1-KG_{yu}}}_2^2 \leq \sigma^2.
\end{equation}

The objective of the optimal factorization problem is to find $C$ and $D$ such that (\ref{eq:y_with_K_fixed}) is minimized subject to (\ref{eq:t_with_K_fixed}) and $K=DC$. Stability is not considered now but it will be seen that the obtained solution is stabilizing anyway. The optimal factorization will later be used to solve the main problem of this paper. Alternatively, it could also be used to factorize a nominal $K$ that was designed for the classical feedback architecture. 

Note that, for given $K$, the first term in (\ref{eq:y_with_K_fixed}) is constant and that the second term is a weighted norm of $D$. In the left hand side of (\ref{eq:t_with_K_fixed}), the first term is a weighted norm of $C$ and the second is constant. Thus, the optimal factorization problem is a minimization of a weighted norm of $D$, subject to an upper bound on a weighted norm of $C$ and the constraint $K=DC$. 

Before the solution to this problem is given, it is noted that the SNR constraint will be impossible to satisfy unless $K$ satisfies
\begin{equation*}
  \alpha \eqdef \sigma^2 - \norm{\frac{KHG_{yu}}{1-KG_{yu}}}_2^2 \geq 0.
\end{equation*}
Actually, if $\alpha=0$ then, since $G_{yv}G_{yv}^*$ has no poles or zeros on $\T$,
\begin{equation*}
  \norm{\frac{CG_{yv}}{1-DCG_{yu}}}_2^2 \!\!\! = 0 \Rightarrow 
  \frac{C}{1-DCG_{yu}} =0 \Rightarrow 
  \frac{KHG_{yu}}{1-KG_{yu}} = 0,
\end{equation*}
which is a contradiction. Thus, it will be assumed that $K$ is such that $\alpha>0$.
Introducing
\begin{equation*}
  S=\frac{1}{1-KG_{yu}}\in\RHinf
\end{equation*}
for notational convenience, the set of feasible pairs $(C,D)$, parametrized by $K$, is defined as
\begin{equation*}
  \Theta_{C,D}(K)=\left \lbrace (C,D):\norm{CSG_{yv}}_2^2 \leq \alpha, DC=K \right \rbrace. 
\end{equation*}
The solution to the optimal factorization problem is now given by the following lemma.

\begin{lem}[Optimal Factorization]\label{lemfnfFactorization}
  Suppose $\alpha > 0$, $S\in \RHinf$, $K\in\Rational$ and that $H\in\RHinf$, $G_{zu}^*G_{zu}\in\RLinf$ and $G_{yv}G_{yv}^* \in \RLinf$ have no zeros on $\mathbb{T}$. Then 
  \begin{equation}\label{eqfnfFactorizationLowerBound}
    \inf_{(C, D)\in\Theta_{C,D}(K)} \norm{DSHG_{zu}}_2^2 \geq \frac{1}{\alpha} \norm{KS^2HG_{zu}G_{yv}}_1^2.
  \end{equation}
  
  Suppose furthermore that $K\in\RLone$ satisfies (\ref{eqfnfKStabilizing}). Then there exists $(C,D)\in\Theta_{C,D}(K)$ with $C\in\Htwo$ outer and \mbox{$D\in\Ltwo$}, such that the minimum is attained and (\ref{eqfnfFactorizationLowerBound}) holds with equality.

  If $K$ is not identically zero, then $(C,D)$ is optimal if and only if $DC = K$ and 
  \begin{equation}\label{eqfnfFactorizationOptimalityCondition}
    \left| C \right|^2 = \frac{\alpha}{\norm{KS^2HG_{zu}G_{yv}}_1}
    \sqrt{\frac{G_{zu}^*G_{zu}}{G_{yv}G_{yv}^*}}  \left|KH\right|
    \hbox{ on } \mathbb{T}.
  \end{equation}
  If $K=0$, then the minimum is achieved by $D=0$ and any $C$ that satisfies $\norm{CSG_{yv}}_2^2 \leq \alpha$.
\end{lem}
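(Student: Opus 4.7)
The approach centers on a rank-one reduction followed by the Cauchy--Schwarz inequality. First I will rewrite the trace norm as a scalar integral. Since $G_{zu}G_{yv}$ is a rank-one matrix at each frequency (column times row), the unique nonzero singular value of $KS^2HG_{zu}G_{yv}$ is $|K||S|^2|H|\sqrt{G_{zu}^*G_{zu}}\sqrt{G_{yv}G_{yv}^*}$, so
\[
\norm{KS^2HG_{zu}G_{yv}}_1 = \frac{1}{2\pi}\intT |K||S|^2|H|\sqrt{G_{zu}^*G_{zu}}\sqrt{G_{yv}G_{yv}^*}\, d\omega.
\]
Substituting $|K|=|D||C|$ under the integral and applying Cauchy--Schwarz to the pairing of $|D||S||H|\sqrt{G_{zu}^*G_{zu}}$ against $|C||S|\sqrt{G_{yv}G_{yv}^*}$ yields
\[
\norm{KS^2HG_{zu}G_{yv}}_1^2 \leq \norm{DSHG_{zu}}_2^2 \cdot \norm{CSG_{yv}}_2^2 \leq \alpha\, \norm{DSHG_{zu}}_2^2,
\]
which establishes the lower bound (\ref{eqfnfFactorizationLowerBound}).

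For attainability when $K \not\equiv 0$, I will exploit the equality conditions: Cauchy--Schwarz is tight iff the two integrands are proportional on $\T$, and the scaling $(C,D)\mapsto(\lambda C,\lambda^{-1}D)$ shows the SNR constraint must bind at any optimum. Together these pin $|C|^2$ to the expression in (\ref{eqfnfFactorizationOptimalityCondition}). To realize this modulus by an outer $C\in\Htwo$, I will invoke outer spectral factorization, which requires $|C|^2\in\Lone(\T)$ and $\log|C|^2\in\Lone(\T)$. The hypotheses deliver both: $|H|$, $G_{zu}^*G_{zu}$ and $G_{yv}G_{yv}^*$ are bounded rational functions with no zeros on $\T$, so they and their inverses are bounded on $\T$ (making their logs bounded), while $K$ is rational and not identically zero, so its finitely many zeros on $\T$ produce only locally integrable logarithmic singularities; the same bounds give $|C|^2\leq \text{const}\cdot|K|\in\Lone$ via $K\in\RLone$.

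Setting $D=KC^{-1}$ then gives $DC=K$ automatically, and the formula $|D|^2=|K|^2/|C|^2$ is again bounded by a constant times $|K|$ (using that $1/|H|$ and $1/\sqrt{G_{zu}^*G_{zu}}$ are bounded on $\T$), so $D\in\Ltwo$. Direct substitution confirms $\norm{CSG_{yv}}_2^2=\alpha$ and that the Cauchy--Schwarz step is an equality, so $(C,D)\in\Theta_{C,D}(K)$ and the lower bound is attained. Necessity of (\ref{eqfnfFactorizationOptimalityCondition}) for any optimal pair follows from running the equality cases backward; the outer condition then determines $C$ uniquely up to a unimodular scalar. The $K=0$ case is handled directly: $D=0$ achieves objective value zero, and any $C$ with $\norm{CSG_{yv}}_2^2\leq\alpha$ is feasible.

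The most delicate step is the construction: verifying that $\log|C|^2\in\Lone(\T)$ so that outer spectral factorization applies, and that the resulting $D=KC^{-1}$ actually sits in $\Ltwo$. Without both, the candidate optimizer would fail to belong to $\Theta_{C,D}(K)$ and attainability of the Cauchy--Schwarz bound would be lost, reducing the conclusion to the infimum statement alone.
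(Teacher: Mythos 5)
Your proposal is correct and follows essentially the same route as the paper: a Cauchy--Schwarz bound on the weighted pairing of $|C|$ and $|D|=|K|/|C|$ (using the rank-one structure of $G_{zu}G_{yv}$), the equality conditions plus an active SNR constraint pinning down $|C|^2$ as in (\ref{eqfnfFactorizationOptimalityCondition}), an outer spectral factorization to realize this modulus with $C\in\Htwo$ outer, and then $D=KC^{-1}\in\Ltwo$. The only cosmetic differences are that the paper works with scalar spectral factors $\hat G_{zu},\hat G_{yv}$ instead of your singular-value computation, and deduces $\log|K|\in\Lone$ from the Youla form of $K$ rather than from rationality of $K\in\RLone$ directly, as you do; both are valid.
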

\begin{proof}
  If $K=0$ then the right hand side of (\ref{eqfnfFactorizationLowerBound}) is~$0$. Letting $D=0$ gives $\norm{DSHG_{zu}}_2^2=0$ and it is clear that $(C,D)\in\Theta_{C,D}$ if $C$ is as stated. 

  Thus, it can now be assumed that $K$ is not identically zero. Then $C$ is not identically zero and $D=KC^{-1}$.

  By assumption both $G_{zu}^*G_{zu}$ and $G_{yv}G_{yv}^*$ are positive on the unit circle. Since these functions are rational this implies that 
  \begin{equation}\label{eqfnfGzuGyvGreaterThanEpsilon}
    \exists \varepsilon > 0 \text{ such that } G_{zu}^*G_{zu} \geq \varepsilon \text{ and } G_{yv}G_{yv}^* \geq \varepsilon, \text{ on }\T.
  \end{equation}
  Thus by the factorization theorem in \cite{Wiener1959} there exist scalar minimum phase transfer functions $\hat{G}_{zu}, \hat{G}_{yv} \in \Htwo$ such that
  \begin{equation*}
    G_{zu}^*G_{zu} = \hat{G}_{zu}^*\hat{G}_{zu}, \qquad G_{yv}G_{yv}^* = \hat{G}_{yv}\hat{G}_{yv}^*.
  \end{equation*}

  Now, $\norm{CSG_{yv}}_2^2 \leq \alpha$ and Cauchy-Schwarz's inequality gives
  \begin{align*}
    \norm{DSHG_{zu}}_2^2 
    &= \norm{KC^{-1}SH\hat{G}_{zu}}_2^2 \\
    &\geq \frac{\norm{CS\hat{G}_{yv}}_2^2}{\alpha} \norm{KC^{-1}SH\hat{G}_{zu}}_2^2\\
    &\geq \frac{1}{\alpha}\left\langle\left|CS\hat{G}_{yv} \right|,\left|KC^{-1}SH\hat{G}_{zu}\right|\right\rangle^2 \\
    &=\frac{1}{\alpha}\norm{KS^2H\hat{G}_{zu}\hat{G}_{yv}}_1^2 \\
    &=\frac{1}{\alpha} \norm{KS^2HG_{zu}G_{yv}}_1^2. 
  \end{align*}
  This proves the lower bound (\ref{eqfnfFactorizationLowerBound}).
  
  Equality holds if and only if $|KC^{-1}SH\hat{G}_{zu}|$ and $|CS\hat{G}_{yv}|$ are proportional on the unit circle and $\norm{CSG_{yv}}_2^2 = \alpha$. It is easily verified that this is equivalent to (\ref{eqfnfFactorizationOptimalityCondition}). Thus, $(C,D)$ achieves the lower bound if and only if $D=KC^{-1}$ and (\ref{eqfnfFactorizationOptimalityCondition}) holds, since these conditions imply that \mbox{$(C,D)\in\Theta_{C,D}(K)$}.
  
  Under the additional assumptions that $K\in\RLone$ satisfies (\ref{eqfnfKStabilizing}), it will now be shown that there exists such $(C,D)\in\Htwo\times\Ltwo$ with $C$ outer. Since $K$ satisfies~(\ref{eqfnfKStabilizing}) with $M,N,Q,U,V \in \RHinf$ it holds that
  \begin{align*}
    \log{\left|K\right|} = \log{\left|MQ-U\right|} - \log{\left|NQ+V\right|}
  \end{align*}
  By Theorem 17.17 in \cite{rudin86real}, $\log{\left|MQ-U\right|} \in \Lone$ and $\log{\left|NQ+V\right|} \in\Lone$ and thus $\log{\left|K\right|} \in \Lone$. 
  It follows from (\ref{eqfnfGzuGyvGreaterThanEpsilon}) and the boundedness of $H$, $\hat{G}_{yv}$ and $\hat{G}_{zu}$ on $\T$ that
  \begin{equation*}
    \intT \log{\left|\frac{\hat{G}_{zu}}{\hat{G}_{yv}}KH\right|} d\omega > -\infty
  \end{equation*}
  and
  \begin{equation*}
    \left|\frac{\hat{G}_{zu}}{\hat{G}_{yv}}KH\right| \in \Lone.
  \end{equation*}
  Then by the factorization theorem in \cite{Wiener1959} there exists an outer function $C\in\Htwo$ such that (\ref{eqfnfFactorizationOptimalityCondition}) holds. 
  Also, $D=KC^{-1}\in\Ltwo$ since
  \begin{align*}
    \norm{KC^{-1}}_2^2 = \frac{1}{\alpha} \norm{KS^2HG_{zu}G_{yv}}_1 \norm{\frac{K\hat{G}_{yv}}{H\hat{G}_{zu}}}_1<\infty.
  \end{align*}
\end{proof}
\begin{rem}
  The spectral factorization gives some freedom in the choice of $(C,D)$ that attains the bound. For example, $D$ instead of $C$ could be chosen to be $\Htwo$ and outer. That would result in having $C\in\Ltwo$. Considering more solutions than the one selected would require a slightly more complicated stability characterization, so this is not done.
\end{rem}
\begin{rem}
  Optimal $D$ will satisfy
  \begin{equation*}
    \left|D\right|^2 = \frac{\norm{KS^2HG_{zu}G_{yv}}_1}{\alpha} \sqrt{\frac{G_{yv}G_{yv}^*}{G_{zu}^*G_{zu}}} \left|\frac{K}{H}\right|  \hbox{ on } \mathbb{T}.
  \end{equation*}
  It is interesting that the magnitudes of both $C$ and $D$ are directly proportional, on the unit circle, to the square root of the magnitude of $K$. In other words, the dynamics of a nominal controller $K$ is "evenly" distributed on both sides of the communication channel. The static gain of $C$ (and $D$) is tuned so that the SNR constraint is active. In the case when $G_{yv}=G_{zu}$, finding an optimal factorization amounts to performing a spectral factorization of $|KH|$ and tuning the static gain. If also $H=1$ then the magnitudes of the frequency responses of $C$ and $D$ will then be proportional.
\end{rem}

\subsection{Equivalent Convex Problem}
It will now be shown that a solution to the main problem can be obtained, with arbitrary accuracy, by solving a convex minimization problem in the Youla parameter.  

As discussed in the problem formulation, $(C,D)$ should satisfy the SNR constraint (\ref{eq:control_SNR_constraint}) and stabilize the system. The latter corresponds to $T\in\Htwo$ or (\ref{eq:internal_stability}). Also, it was assumed that $DC\in\Rational$.
Thus, the set of feasible $(C,D)$ is given by
\begin{align*}
  \Theta_{C,D}=\left\lbrace(C,D): DC\in\Rational\:, (\ref{eq:control_SNR_constraint}), T\in\Htwo \right\rbrace.
\end{align*}

Let $M,N,U,V$ be determined by a coprime factorization of $G_{yu}$ and introduce
\begin{align}
A&=M^2G_{zu}G_{yv} \label{eq:Adefinition}\\
B&=M^2N^{-1}VG_{zu}G_{yv} \\
E&=MNH \\
F&=(MV-1)H \\
L&=G_{zv}-MN^{-1}G_{zu}G_{yv}. \label{eq:Ldefinition}
\end{align}
It will now be shown that minimization of $J(C,D)$ over $\Theta_{C,D}$ can be performed by minimizing the convex functional
\begin{align*}
  \varphi(Q) &= \norm{L+AQ+B}_2^2 + \frac{\norm{\left(AQ+B\right)\left(EQ+F\right)}_1^2}{\sigma^2-\norm{EQ+F}_2^2},
\end{align*}
over the convex set
\begin{align*}
  \Theta_Q = \left\{ Q : Q\in\RHinf, \norm{EQ+F}_2^2 < \sigma^2 \right\}. 
\end{align*}

The $Q\in\Theta_Q$ obtained from minimizing $\varphi(Q)$ will be used to construct $(C,D)\in\Theta_{C,D}$. However, this will not be possible for $Q$ for which the corresponding $K$ has poles on $\T$. For such $Q$ a small perturbation can then be applied first. This will result in an increased cost, but this increase can be made arbitrarily small. That this is possible is established by the following lemma.
\begin{lem}\label{lemfnfQhat}
  Suppose $Q\in\Theta_Q$ and $\varepsilon>0$. Then there exists $\hat{Q}\in\Theta_Q$ such that 
  \begin{align}\label{eqfnfKFromQhat}
    K=\frac{M\hat{Q}-U}{N\hat{Q}+V} \in \RLone,
  \end{align}
  and
  \begin{equation*}
    \varphi(\hat{Q}) < \varphi(Q)+\varepsilon.
  \end{equation*}
\end{lem}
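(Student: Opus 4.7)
The plan is to exhibit a one-parameter family of perturbations $\hat{Q}_\delta = Q + \delta g$, with $g \in \RHinf$ and small real $\delta > 0$, which lies in $\Theta_Q$, has cost arbitrarily close to $\varphi(Q)$, and for which all but finitely many $\delta$ yield a $K \in \RLone$. (If the $K$ associated with $Q$ already lies in $\RLone$, one takes $\hat{Q} = Q$ and there is nothing to prove.)

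First I would establish two continuity facts. Since
$\norm{E\hat{Q}_\delta + F}_2^2 = \norm{EQ + F + \delta Eg}_2^2$
depends continuously on $\delta$, and $\norm{EQ + F}_2^2 < \sigma^2$ strictly because $Q \in \Theta_Q$, the perturbed parameter stays in $\Theta_Q$ for all sufficiently small $\delta > 0$. The same type of argument, applied term by term, shows that each constituent of $\varphi$ is continuous in $\delta$: the denominator $\sigma^2 - \norm{EQ+F+\delta Eg}_2^2$ remains positive and bounded away from zero on a neighbourhood of $\delta = 0$, $\norm{L + AQ + B + \delta A g}_2^2$ is clearly continuous, and $\norm{(AQ+B+\delta Ag)(EQ+F+\delta Eg)}_1^2$ is continuous by Cauchy–Schwarz and the triangle inequality. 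Hence $\varphi(\hat{Q}_\delta) \to \varphi(Q)$ as $\delta \to 0$, so the bound $\varphi(\hat{Q}_\delta) < \varphi(Q) + \varepsilon$ holds for all small $\delta$.

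Next I would analyse when $\hat{K}_\delta = (M\hat{Q}_\delta - U)/(N\hat{Q}_\delta + V)$ belongs to $\RLone$. As $\hat{K}_\delta$ is a real rational proper function, the only obstruction is a zero of $N\hat{Q}_\delta + V = (NQ+V) + \delta (Ng)$ on $\T$. Crucially, at any point $z_0 \in \T$ with $(NQ + V)(z_0) = 0$, the Bezout identity $VM + UN = 1$ forces $N(z_0) \neq 0$ (else $V(z_0) = 0$ too, contradicting the identity at $z_0$). So $\delta$ is ``bad'' exactly when $\delta = -(NQ+V)(z)/(Ng)(z)$ for some $z \in \T$, i.e., when $\delta$ lies in the image under a fixed rational map of (an open subset of) $\T$ intersected with~$\R$.

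The main obstacle is showing that this bad set has empty interior in $(0, \infty)$. With the generic choice $g \equiv 1$ the map $z \mapsto -(NQ+V)(z)/N(z)$ is rational and not identically constant, its restriction to $\T$ is a real-analytic curve in $\C$, and its intersection with $\R$ is finite unless the curve happens to lie entirely in $\R$. In that degenerate case I would switch to $g(z) = 1/(z-\beta)$ for some $\beta \in (0,1)$: writing $h := (NQ+V)/N$, which by assumption is real on $\T$, one finds that the relevant image becomes $-h(e^{i\omega})(e^{i\omega} - \beta)$, whose imaginary part is $-h(e^{i\omega})\sin\omega$ and therefore vanishes only at the finitely many zeros of $h$ on $\T$ together with $\omega \in \{0, \pi\}$. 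In both cases the bad set is finite, so I can choose $\delta > 0$ arbitrarily small that avoids it and simultaneously satisfies the bounds from the first paragraph; the resulting $\hat{Q} = \hat{Q}_\delta$ meets every requirement of the lemma.
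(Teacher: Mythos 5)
Your argument is correct, but it takes a genuinely different route from the paper's. The paper perturbs $Q$ by $\lambda_0+\lambda_1 z^{-1}$ (just $\lambda_0$ when the offending zero is real) and applies the implicit function theorem at each zero $z_0\in\T$ of $NQ+V$ --- using, as you do, that the Bezout identity forces $N(z_0)\neq 0$ --- to move that zero off the circle, with a separate minimum-distance argument guaranteeing that the remaining zeros cannot reach $\T$ under the perturbation; the construction is then iterated over the finitely many circle zeros, splitting the $\varepsilon$-budget. You instead take a single one-parameter family $Q+\delta g$ and argue genericity: the ``bad'' values of $\delta$ are the real values attained on $\T$ by the rational map $z\mapsto -(NQ+V)(z)/(N(z)g(z))$, which form a finite set unless that map is real on all of $\T$, a degeneracy you eliminate by replacing $g=1$ with $g(z)=1/(z-\beta)$, $\beta\in(0,1)$ (still in $\RHinf$). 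This buys a shorter argument with no implicit function theorem and no iteration --- all circle zeros are removed at once --- while the paper's construction is more explicit about where each perturbed zero is placed. Two points you state somewhat informally, though at a level of rigor comparable to the paper's own proof: the finiteness of the bad set is cleanest via rationality (the imaginary part of the boundary map extends to a rational function of $z$, so on $\T$ it either vanishes identically or at finitely many points; pure real-analyticity is slightly delicate because $N$ may vanish on $\T$, where the map blows up), and the properness of $\hat{K}_\delta$, which you assert, follows since $N$ is strictly proper and the Bezout identity at $z=\infty$ gives $(N\hat{Q}_\delta+V)(\infty)=V(\infty)\neq 0$.
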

The proof of Lemma \ref{lemfnfQhat} is based on a perturbation argument and can be found in the Appendix.

The main theorem of the paper can now be stated.
\begin{thm}\label{thmfnfEquivalentProblem}
  Suppose $\sigma^2>0$, that   
  $G_{yu}=NM^{-1}$ is a coprime factorization over $\RHinf$, that $U,V \in \RHinf$ satisfy the Bezout identity $VM+UN=1$ and that $H\in\RHinf$, \mbox{$G_{zu}^*G_{zu}\in\RLinf$} and $G_{yv}G_{yv}^* \in \RLinf$ have no zeros on $\mathbb{T}$. Then
  \begin{equation}\label{eqfnfInfJandInfVarphi}
    \inf_{(C,D)\in\Theta_{C,D}} J(C,D) = \inf_{Q\in\Theta_Q} \varphi(Q).
  \end{equation}
  Furthermore, suppose $Q\in\Theta_{Q}$, $\varepsilon>0$ and let $\hat{Q}\in\Theta_{Q}$ be as in Lemma~\ref{lemfnfQhat}. Then there exists $(C,D)$ such that the following conditions hold:
  \begin{itemize}
  \item If $M\hat{Q}-U $ is not identically zero: $(C,D) \in \Htwo \times \Ltwo$, where $C$ is outer and
    \begin{align}
      K &= \frac{M\hat{Q}-U}{N\hat{Q}+V} \label{eqfnfOptimalK}\\
      \left| C \right|^2 &= \frac{\sigma^2-\norm{\dfrac{KHG_{yu}}{1-KG_{yu}}}_2^2}{\norm{\dfrac{KHG_{zu}G_{yv}}{(1-KG_{yu})^2}}_1}    \sqrt{\frac{G_{zu}^*G_{zu}}{G_{yv}G_{yv}^*}}  \left|KH\right| \hbox{ on } \mathbb{T} \label{eqfnfOptimalC} \\
      D&=KC^{-1} \label{eqfnfOptimalD}
    \end{align}
  \item If $M\hat{Q}-U = 0$: $C=D=0$.
  \end{itemize}
  If $(C,D)$ satisfy these conditions, then $(C,D)\in\Theta_{C,D}$ and
  \begin{equation*}
    J(C,D) < \varphi(Q)+\varepsilon.
  \end{equation*}
\end{thm}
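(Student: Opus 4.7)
The plan is to first establish an algebraic identity that rewrites $\varphi(Q)$, evaluated at the Youla parameter $Q$, as the sum of the first (constant-in-factorization) term of $J(C,D)$ plus the Lemma~\ref{lemfnfFactorization} lower bound for the second term. Introducing $S=1/(1-KG_{yu})$ and using the Bezout identity $VM+UN=1$ together with $K=(MQ-U)/(NQ+V)$, one computes $S=M(NQ+V)$, $KS=M(MQ-U)$ and $KSG_{yu}=N(MQ-U)=NHQ\cdot M - UN\cdot \text{(up to }H\text{)}$; multiplying by $H$ gives $\frac{KHG_{yu}}{1-KG_{yu}}=EQ+F$, so $\alpha=\sigma^2-\|EQ+F\|_2^2$ and $Q\in\Theta_Q$ exactly corresponds to $\alpha>0$. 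A similar reduction gives $L+AQ+B=G_{zv}+KSG_{zu}G_{yv}$ and $(AQ+B)(EQ+F)=KS^2HG_{zu}G_{yv}$. These two identities show that
\[
\varphi(Q)=\Bigl\|G_{zv}+\tfrac{KG_{zu}G_{yv}}{1-KG_{yu}}\Bigr\|_2^2+\tfrac{1}{\alpha}\bigl\|KS^2HG_{zu}G_{yv}\bigr\|_1^2,
\]
i.e.\ exactly the first term of $J(C,D)$ plus the Lemma~\ref{lemfnfFactorization} lower bound.

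With this identity in hand, the first inequality $\inf_{Q\in\Theta_Q}\varphi(Q)\le\inf_{(C,D)\in\Theta_{C,D}}J(C,D)$ is immediate: any feasible $(C,D)$ gives $T\in\Htwo$, so Lemma~\ref{lemfnfNecessaryForStability} yields a $Q\in\RHinf$ producing $K=DC$; the SNR constraint combined with the contradiction argument already used in Section~\ref{sec:solution} shows $\alpha>0$, so $Q\in\Theta_Q$; and Lemma~\ref{lemfnfFactorization} gives $\varphi(Q)\le J(C,D)$.

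For the reverse direction and the constructive second half of the theorem, I start from an arbitrary $Q\in\Theta_Q$ and $\varepsilon>0$, apply Lemma~\ref{lemfnfQhat} to obtain $\hat Q\in\Theta_Q$ with $K=(M\hat Q-U)/(N\hat Q+V)\in\RLone$ and $\varphi(\hat Q)<\varphi(Q)+\varepsilon$. Splitting on whether $M\hat Q-U$ vanishes: if it does, $K=0$, pick $C=D=0$, which trivially satisfy internal stability, the SNR constraint, and give $J(C,D)=\|G_{zv}\|_2^2=\varphi(\hat Q)$ after checking that $B=-L$ at $K=0$ (since $KS=0$ forces the first-term identification and the second term vanishes). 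Otherwise, apply Lemma~\ref{lemfnfFactorization} with this $K$: because $K\in\RLone$ satisfies the Youla form, the lemma produces an outer $C\in\Htwo$ and $D=KC^{-1}\in\Ltwo$ with $DC=K$ and $\|CSG_{yv}\|_2^2=\alpha$, saturating the lower bound so that $J(C,D)=\varphi(\hat Q)$. The explicit formula (\ref{eqfnfOptimalC}) is just (\ref{eqfnfFactorizationOptimalityCondition}) rewritten using $\alpha=\sigma^2-\|KHG_{yu}/(1-KG_{yu})\|_2^2$ and $KS^2HG_{zu}G_{yv}=KHG_{zu}G_{yv}/(1-KG_{yu})^2$. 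Lemma~\ref{lemfnfSufficientForStability} then upgrades outerness of $C$ and $D\in\Ltwo$ into $T\in\Htwo$; together with $DC=K\in\Rational$ this places $(C,D)$ in $\Theta_{C,D}$, and $J(C,D)=\varphi(\hat Q)<\varphi(Q)+\varepsilon$. Taking the infimum over $Q$ closes the chain.

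I expect the main obstacle to be purely bookkeeping: verifying the algebraic identities $S=M(NQ+V)$, $EQ+F=NH(MQ-U)$, $(AQ+B)(EQ+F)=KS^2HG_{zu}G_{yv}$ and $L+AQ+B=G_{zv}+KSG_{zu}G_{yv}$ cleanly, so that $\varphi(Q)$ lines up with Lemma~\ref{lemfnfFactorization}'s bound term by term. The only subtle point is the degenerate case $K\equiv 0$, where Lemma~\ref{lemfnfFactorization} does not supply an outer $C$ but the direct choice $C=D=0$ is admissible and attains the trivial bound, and the case $M\hat Q-U\equiv 0$ must be recognized as this very case. Everything else (the SNR-activity argument forcing $\alpha>0$, properness of the constructed $D=(DC)C^{-1}$, and the membership $K\in\Rational$) has already been handled either in the preceding lemmas or in the problem formulation.
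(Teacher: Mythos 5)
Your proposal is correct and follows essentially the same route as the paper's proof: the same algebraic identities ($S=M(NQ+V)$, $EQ+F=\frac{KHG_{yu}}{1-KG_{yu}}$, $L+AQ+B=G_{zv}+KSG_{zu}G_{yv}$, $(AQ+B)(EQ+F)=KS^2HG_{zu}G_{yv}$) to line $\varphi$ up with the Lemma~\ref{lemfnfFactorization} bound, Lemma~\ref{lemfnfNecessaryForStability} plus the SNR-activity argument for the lower-bound direction, and Lemma~\ref{lemfnfQhat} followed by the optimal factorization and Lemma~\ref{lemfnfSufficientForStability} for the $\varepsilon$-suboptimal construction, including the degenerate case $M\hat Q-U=0$. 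No gaps beyond minor notational sloppiness in the intermediate identity for $KSG_{yu}H$.
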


\begin{proof}
  Consider $(C,D) \in \Theta_{C,D}$ and define $K=DC$. Then $(C,D)\in\Theta_{C,D}(K)$ for this choice of $K$. 
  Moreover, because $T\in\Htwo$ it follows from Lemma \ref{lemfnfNecessaryForStability} that $K$ can be written using the Youla parametrization (\ref{eqfnfKStabilizing}). Since the SNR constraint (\ref{eq:control_SNR_constraint}) is satisfied by $(C,D)$ it follows that 
  $K\in\Theta_K$, where $\Theta_K$ is defined by
  \begin{equation*}
    \Theta_{K} = \left \lbrace K: (\ref{eqfnfKStabilizing}), \norm{\frac{KHG_{yu}}{1-KG_{yu}}}_2^2 < \sigma^2 \right \rbrace.
  \end{equation*}
  The inequality in this definition is strict because it was shown earlier that equality cannot hold.
  It has thus been proved that 
  \begin{equation}\label{eqfnfThetaBCDGivesThetaCD}
    (C,D)\in\Theta_{C,D} \Rightarrow (C,D) \in \Theta_{C,D}(K) \text{ for some } K \in \Theta_{K}.
  \end{equation}

  A lower bound will now be determined for $J(C,D)$. This will be accomplished through a series of inequalities and equalities, where each step will be explained afterwards.
  \begin{align*}
    &\inf_{(C,D)\in\Theta_{C,D}} J(C,D) \\
    &\geq \inf_{K \in \Theta_{K}} \inf_{(C,D) \in \Theta_{C,D}(K)} \norm{G_{zv}+\frac{KG_{zu}G_{yv}}{1-KG_{yu}}}_2^2 \! + \norm{\frac{DHG_{zu}}{1-KG_{zu}}}_2^2 \\
    &= \inf_{K \in \Theta_{K}} \! \left[ \norm{G_{zv} \! + \! \frac{KG_{zu}G_{yv}}{1-KG_{yu}}}_2^2 \! \! + \! \! \! \inf_{(C,D) \in \Theta_{C,D}(K)}  \norm{\frac{DHG_{zu}}{1-KG_{zu}}}_2^2 \right] \\
    &\geq \inf_{K \in \Theta_{K}} \norm{G_{zv}+\frac{KG_{zu}G_{yv}}{1-KG_{yu}}}_2^2 + \frac{\norm{\dfrac{KHG_{zu}G_{yv}}{(1-KG_{yu})^2}}_1^2}{\sigma^2-\norm{\dfrac{KHG_{yu}}{1-KG_{yu}}}_2^2} \\
   &= \inf_{Q\in\Theta_{Q}} \varphi(Q)   
  \end{align*}

  The first step follows from (\ref{eqfnfThetaBCDGivesThetaCD}) and rewriting $J(C,D)$ in terms of $K$. In the second step, the first term has been moved out since it is constant in the inner minimization. The third step follows from application of Lemma \ref{lemfnfFactorization} with 
  \begin{equation*}
    \alpha=\sigma^2 - \norm{\frac{KG_{yu}}{1-KG_{yu}}}_2^2>0, \quad S=\frac{1}{1-KG_{yu}}\in\RHinf.
  \end{equation*}
  
Let $A,B,E,F,L$ be given by (\ref{eq:Adefinition})--(\ref{eq:Ldefinition}). Application of the Youla parametrization and the Bezout identity then gives
\begin{multline*}
G_{zv}+\frac{KG_{zu}G_{yv}}{1-KG_{yu}} 
= G_{zv}+\left(\frac{1}{1-KG_{yu}}-1 \right)G_{zu}G_{yv}G_{yu}^{-1} 
\\ = G_{zv} + \left(MNQ+MV-1\right)G_{zu}G_{yv}MN^{-1} 
\\ = AQ+B+G_{zv}-MN^{-1}G_{zu}G_{yv}
\end{multline*}
and
\begin{multline*}
\frac{KHG_{zu}G_{yv}}{(1-KG_{yu})^2} = \frac{G_{zu}G_{yv}G_{yu}^{-1}KHG_{yu}}{(1-KG_{yu})^2} 
\\ = G_{zu}G_{yv}M^2\left( Q+N^{-1}V \right) \left( MNQ+MV-1 \right) H
\\ = (AQ+B)(EQ+F).
\end{multline*}
The fourth step now follows from the definition of $\varphi(Q)$.

  Now a suboptimal solution will be constructed. Suppose that $Q\in\Theta_{Q}$ and $\varepsilon>0$ and let $\hat{Q}\in\Theta_{Q}$ be as given by Lemma \ref{lemfnfQhat} and define $K\in\RLone$ by (\ref{eqfnfOptimalK}). Then $K\in\Theta_{K}$ and
  \begin{equation*}
    \varphi(\hat{Q}) =
    \norm{G_{zv}+\frac{KG_{zu}G_{yv}}{1-KG_{yu}}}_2^2 + \frac{\norm{\dfrac{KHG_{zu}G_{yv}}{(1-KG_{yu})^2}}_1^2}{\sigma^2-\norm{\dfrac{KHG_{yu}}{1-KG_{yu}}}_2^2}
  \end{equation*}

  If $M\hat{Q}-U=0$ then $K=0$, 
  \begin{equation*}
    J(0,0)=\norm{G_{zv}}_2^2=\varphi(\hat{Q}) < \varphi(Q) + \varepsilon,
  \end{equation*}
  and we are done.

  If, on the other hand, $M\hat{Q}-U$ is not identically zero then $K$ is not identically zero.
  By Lemma~\ref{lemfnfFactorization} there then exists an outer $C\in\Htwo$ and $D\in\Ltwo$ such that (\ref{eqfnfOptimalC}) and (\ref{eqfnfOptimalD}) are satisfied. The lemma also says that such $(C,D)$ satisfy 
  \begin{equation*}
    \norm{\frac{DHG_{zu}}{1-KG_{yu}}}_2^2 = \frac{\norm{\dfrac{KHG_{zu}G_{yv}}{(1-KG_{yu})^2}}_1^2}{\sigma^2-\norm{\dfrac{KHG_{yu}}{1-KG_{yu}}}_2^2}
  \end{equation*}
  and
  \begin{equation*}
    \norm{\frac{CG_{yv}}{1-KG_{yu}}}_2^2 \leq \sigma^2-\norm{\dfrac{KHG_{yu}}{1-KG_{yu}}}_2^2.
  \end{equation*}

  $D,C$ and $K$ satisfy the conditions of Lemma \ref{lemfnfSufficientForStability}, so $T\in\Htwo$, which implies that $(C,D)\in\Theta_{C,D}$. Moreover,
  \begin{align*}
    J(C,D) &= \norm{G_{zv}+\frac{KG_{zu}G_{yv}}{1-KG_{yu}}}_2^2 + \norm{\frac{DHG_{zu}}{1-KG_{yu}}}_2^2 \\ &= \varphi(\hat{Q}) = \varphi(Q) + \varepsilon.
  \end{align*}
  Since $\varepsilon$ can be made arbitrarily small this shows that (\ref{eqfnfInfJandInfVarphi}) holds and hence the proof is complete.
\end{proof}

\begin{rem}
  Theorem \ref{thmfnfEquivalentProblem} shows that an $\varepsilon$-suboptimal solution to the main problem  can be found by minimizing $\varphi(Q)$ over $\Theta_{Q}$. The obtained $Q$ may have to be perturbed so that the resulting $K$ has no poles on the unit circle. Then $C$ is given by a spectral factorization and $D$ is then obtained from $C$.
\end{rem}

A by-product of Theorem \ref{thmfnfEquivalentProblem} is a necessary and sufficient criterion for the existence of a stabilizing LTI controller that satisfies the SNR constraint.
\begin{cor}\label{corfnfStabilizationCriteria}
  There exists $(C,D)$ that stabilize the closed loop system of Fig.~\ref{fig:control_block_diagram} subject to the SNR constraint (\ref{eq:control_SNR_constraint}) if and only if there exists $Q\in\RHinf$ such that 
\begin{equation}\label{eqfnfDerivedStabilizabilityCriteria}
\norm{\left (MNQ+MV-1\right )H}_2^2 < \sigma^2.
\end{equation}
For the AWGN channel, we have that $H=1$ and the condition can be written
  \begin{equation}\label{eqfnfDerivedStabilizabilityCriteriaAWGN}
    \norm{MNQ+MV}_2^2 < \sigma^2 +1
  \end{equation}
  since $MNQ+MV-1$ is strictly proper and thus orthogonal to $1$.
\end{cor}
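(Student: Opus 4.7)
The plan is to read off both directions of the equivalence directly from the results already established (Lemmas \ref{lemfnfNecessaryForStability}--\ref{lemfnfFactorization} and Theorem \ref{thmfnfEquivalentProblem}), noting that the corollary concerns only feasibility of the stability/SNR pair, not the cost. The main work reduces to a Youla-type computation that identifies $KHG_{yu}/(1-KG_{yu})$ with $(MNQ+MV-1)H$, together with a short argument that the SNR inequality is necessarily strict.

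For the forward direction, I would take $(C,D)\in\Theta_{C,D}$, set $K=DC$, and invoke Lemma \ref{lemfnfNecessaryForStability} to produce $Q\in\RHinf$ with $K=(MQ-U)/(NQ+V)$. The Bezout identity $MV+NU=1$ then yields
\[
1-KG_{yu}=\frac{1}{M(NQ+V)},\qquad \frac{KG_{yu}}{1-KG_{yu}}=MNQ+MV-1.
\]
Multiplying by $H$, the second summand of the SNR constraint (\ref{eq:control_SNR_constraint}) equals $\|(MNQ+MV-1)H\|_2^2$, so the constraint gives $\|(MNQ+MV-1)H\|_2^2\le\sigma^2$. Strictness follows exactly as in the $\alpha>0$ argument preceding Lemma \ref{lemfnfFactorization}: equality would force the first summand $\|CG_{yv}/(1-KG_{yu})\|_2^2$ to vanish, hence $C\equiv 0$ (since $G_{yv}$ has no zeros on $\T$), hence $K=0$, contradicting the supposed equality.

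For the reverse direction, I would take $Q\in\RHinf$ satisfying (\ref{eqfnfDerivedStabilizabilityCriteria}), so that $Q\in\Theta_Q$, and invoke Theorem \ref{thmfnfEquivalentProblem} with any fixed $\varepsilon>0$. The theorem hands back a pair $(C,D)\in\Theta_{C,D}$, which by construction is both internally stabilizing and SNR-feasible; the accompanying cost bound $J(C,D)<\varphi(Q)+\varepsilon$ plays no role here.

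Finally, for the AWGN reformulation, I would check that $MNQ+MV-1$ is strictly proper: since $G_{yu}$ is strictly proper, $N$ vanishes at infinity while $M$ does not, and evaluating $1-KG_{yu}=[M(NQ+V)]^{-1}$ at $z=\infty$ forces $M(\infty)V(\infty)=1$, so $(MNQ+MV-1)(\infty)=0$. Hence $MNQ+MV-1$ is orthogonal to the constant function $1\in\Ltwo$; expanding the norm and using $\|1\|_2=1$ gives $\|MNQ+MV-1\|_2^2=\|MNQ+MV\|_2^2-1$, which converts (\ref{eqfnfDerivedStabilizabilityCriteria}) (with $H=1$) into (\ref{eqfnfDerivedStabilizabilityCriteriaAWGN}). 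The main obstacle in the proof is really just the strictness argument in the forward direction, which is a light rerun of the $\alpha>0$ discussion.
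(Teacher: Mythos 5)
Your proposal is correct and follows essentially the same route the paper leaves implicit: the forward direction is Lemma~\ref{lemfnfNecessaryForStability} plus the Youla/Bezout identification $\frac{KHG_{yu}}{1-KG_{yu}}=(MNQ+MV-1)H=EQ+F$ with the paper's own $\alpha>0$ strictness argument, and the reverse direction is the constructive part of Theorem~\ref{thmfnfEquivalentProblem} (via Lemmas~\ref{lemfnfQhat}, \ref{lemfnfFactorization} and \ref{lemfnfSufficientForStability}), so feasibility of $\Theta_{C,D}$ is equivalent to nonemptiness of $\Theta_Q$, which is exactly (\ref{eqfnfDerivedStabilizabilityCriteria}). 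The only nitpick is the AWGN step: obtaining $M(\infty)V(\infty)=1$ from properness of $1-KG_{yu}$ is slightly circular; it is cleaner to evaluate the Bezout identity $VM+UN=1$ at $z=\infty$ together with $N(\infty)=0$ (which follows from strict properness of $G_{yu}$ and coprimality), giving strict properness of $MNQ+MV-1$ for every $Q\in\RHinf$.
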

\begin{rem}
  Corollary \ref{corfnfStabilizationCriteria} implies that the minimum SNR compatible with stabilization of a stochastically disturbed plant by an output feedback LTI controller with two degrees of freedom over an ACGN channel can be found by minimizing the left hand side of (\ref{eqfnfDerivedStabilizabilityCriteria}) over $Q\in\RHinf$.  
  
  For the AWGN case, the analytical condition (\ref{eq:SNR_framework_stabilizability}), presented in \cite{braslavsky07}, is actually derived from a minimization of the left hand side of (\ref{eqfnfDerivedStabilizabilityCriteriaAWGN}). This means that the same condition is necessary and sufficient in the present problem setting as well, when the channel noise is white. This fact has been noted before in \cite{silva2010a}. To elaborate, there is no plant disturbance in the setup of \cite{braslavsky07}. In that case, the SNR required for stabilizability will be the same regardless if the controller has one or two degrees of freedom. However, \cite{Rojas2009} considered the case when there is a plant disturbance and showed that the SNR required for stabilizability may then be larger than prescribed by~(\ref{eq:SNR_framework_stabilizability}). 
However, the controller in \cite{Rojas2009} was assumed to only have one DOF (the encoder part was fixed to be a unity gain). Theorem 17 in \cite{silva2010a} and this corollary shows that if the controller has two DOF, then (\ref{eq:SNR_framework_stabilizability}) is again a necessary and sufficient criterion for stabilizability.

For the ACGN case, however, this result is not identical to those in \cite{Rojas2010} and \cite{middleton2009} since they assumed no plant disturbance and (effectively) controllers with one DOF.
\end{rem}

It will now be shown that the minimization of $\varphi(Q)$ over $\Theta_{Q}$ is actually a convex problem. To this end, define the functional
\begin{align*}
  \rho(a,e) &= \frac{1}{2\pi}\intT a(\omega)^2 d\omega + \frac{\left(\frac{1}{2\pi}\intT a(\omega)e(\omega) d\omega\right)^2}{\sigma^2+1-\frac{1}{2\pi}\intT e(\omega)^2 d\omega}
\end{align*}	
with domain $\Theta_\rho$ consisting of functions $a$ and $e$ that are continuous on $[-\pi,\pi]$ and satisfy
\[\frac{1}{2\pi}\intT e(\omega)^2 \ d\omega < \sigma^2+1.\]
\begin{lem}\label{lem:coding_feedb_relaxation_convex}
  The functional $\rho(a,e)$ is convex.
\end{lem}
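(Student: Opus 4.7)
The plan is to rewrite $\rho(a,e)$ as a pointwise supremum of convex functions of $(a,e)$, so that convexity follows immediately from the fact that such suprema are convex. Let me denote $\langle f,g\rangle = \frac{1}{2\pi}\intT f(\omega)g(\omega)\,d\omega$ and $\|f\|^2 = \langle f,f\rangle$ for brevity, so that
\[\rho(a,e) = \|a\|^2 + \frac{\langle a,e\rangle^2}{\sigma^2+1-\|e\|^2}\]
on the set where $\|e\|^2 < \sigma^2+1$.

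First I would establish the variational identity
\[\frac{\langle a,e\rangle^2}{\sigma^2+1-\|e\|^2} = \sup_{\lambda\in\R}\,\left\{2\lambda\langle a,e\rangle - \lambda^2(\sigma^2+1-\|e\|^2)\right\},\]
which is just a one-dimensional concave maximization in $\lambda$; differentiating gives the optimum $\lambda^{\star}=\langle a,e\rangle/(\sigma^2+1-\|e\|^2)$, and substituting recovers the left-hand side. Adding $\|a\|^2$ and completing the square inside the supremum then yields
\[\rho(a,e) = \sup_{\lambda\in\R}\,\bigl\{\,\|a+\lambda e\|^2 - \lambda^2(\sigma^2+1)\,\bigr\}.\]

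Next, for each fixed $\lambda$ the map $(a,e)\mapsto \|a+\lambda e\|^2$ is the squared $\Ltwo$ norm composed with the linear map $(a,e)\mapsto a+\lambda e$, hence it is convex in $(a,e)$. The constant $-\lambda^2(\sigma^2+1)$ does not affect convexity. Therefore the expression inside the supremum is, for each $\lambda$, a convex function of $(a,e)$ on $\Theta_\rho$, and the pointwise supremum of a family of convex functions is convex. This gives convexity of $\rho$.

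I do not anticipate a serious obstacle: the only place requiring care is the variational identity, which relies on $\sigma^2+1-\|e\|^2>0$ to guarantee that the inner quadratic in $\lambda$ is strictly concave (so that its maximum is attained and equals the claimed ratio). This positivity is exactly what defines $\Theta_\rho$, so the argument is self-contained.
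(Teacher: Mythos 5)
Your proof is correct: the variational identity holds because for fixed $(a,e)\in\Theta_\rho$ the expression $\norm{a+\lambda e}_2^2-\lambda^2(\sigma^2+1)=\norm{a}_2^2+2\lambda\langle a,e\rangle-\lambda^2\bigl(\sigma^2+1-\norm{e}_2^2\bigr)$ is strictly concave in $\lambda$ (precisely by the defining condition of $\Theta_\rho$), and its maximum equals $\rho(a,e)$; each member of the family is convex in $(a,e)$, and a pointwise supremum of convex functions is convex on the convex set $\Theta_\rho$. The underlying trick is the same as the paper's: the paper also represents the troublesome ratio by maximizing out an auxiliary scalar (its variable $v$ plays the role of your $\lambda$), obtaining the convex finite-dimensional function $g(x,y)=x^Tx+(x^Ty)^2/(1-y^Ty)$ as $\max_v f(x,y,v)$. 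The difference is in how this is transferred to the functional $\rho$: the paper discretizes the integrals, evaluates $g$ at Riemann-sample vectors $\hat a,\hat e$ scaled by $\sqrt{n}$ and $\sqrt{(\sigma^2+1)n}$, and obtains $\rho$ as the pointwise limit of convex functions as $n\to\infty$, whereas you perform the partial maximization directly in the function space and invoke the supremum rule once. Your route is slightly cleaner and more self-contained, since it avoids the discretization and limiting step (and hence the reliance on continuity of $a$ and $e$ to justify convergence of the Riemann sums); the paper's route buys only the reuse of an elementary finite-dimensional fact before passing to the limit.
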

\begin{proof} Take $n \geq 2$. 
  The function 
  \begin{align*}
    f(x,y,v) &= (x+yv)^T(x+yv)-v^2, \\
    &= x^Tx + 2vx^Ty + v^2(y^Ty-1)
  \end{align*}
  with domain $\left \{ (x,y,v) : x,y \in \mathbb{R}^n,\ v \in \mathbb{R},\ y^Ty < 1\right \}$, is convex in $\left(x,y \right)$ for any $v\in\mathbb{R}$. Thus,
  \begin{align*}
    g(x,y) = \max_{v\in\mathbb{R}} f(x,y,v) = x^Tx + \frac{\left (x^Ty\right )^2}{1-y^Ty},
  \end{align*}
  with domain $\left \{ (x,y) : x,y \in \mathbb{R}^n,\ y^Ty < 1\right \}$, is convex in $(x,y)$ since it is the pointwise maximum of a set of convex functions \cite{Boyd2004}.
  Now, suppose $(a,e)\in\Theta_\rho$. Let
  \begin{align*}
    \omega_1 &= 0, \quad \omega_{k+1}-\omega_k = 2\pi/n, \quad k=1,\ldots, n-1 
    \\
    \hat{a}&=\begin{bmatrix}a(\omega_1) & a(\omega_2) & \ldots & a(\omega_n)	\end{bmatrix}^T \\
    \hat{e}&=\begin{bmatrix}e(\omega_1) & e(\omega_2) & \ldots & e(\omega_n)	\end{bmatrix}^T.   \end{align*}
  By definition of the integral, it holds that
  \begin{align*}
    \lim_{n \rightarrow \infty} \frac{\hat{e}^T\hat{e}}{(\sigma^2+1) n} = \frac{1}{(\sigma^2+1)} \frac{1}{2\pi}\intT e(\omega)^2 d\omega < 1.
  \end{align*}
  So for large $n$, $\left (\hat{a},(\sigma^2+1)^{-1/2}\hat{e} \right )/\sqrt{n}$ belongs to the domain of $g$ and
  \begin{align*}
        \rho(a,e) &= \lim_{n\rightarrow\infty} g\left (\frac{\hat{a}}{\sqrt{n}},\frac{\hat{e}}{\sqrt{(\sigma^2+1) n}}\right ).
  \end{align*}
  Since the right hand side is convex in $(\hat{a},\hat{e})$, and thus in $(a,e)$, it follows that $\rho(a,e)$ is convex.
\end{proof}
\begin{rem}
  Convexity of $\rho(a,e)$ has been shown previously in \cite{derpich2011}. This proof is, however, substantially shorter.
\end{rem}

The convex functional $\rho$ will be used in a relaxation of the minimization of $\varphi(Q)$. A slight modification has to be done to $\varphi(Q)$ in order to be able to compare the two functionals. For this purpose, define
\begin{equation*}
\Delta(Q) = \norm{L}_2^2+2\re\left\langle L,AQ+B\right\rangle.
\end{equation*}
Since $\Delta(Q)$ is affine in $Q$ it doesn't affect the convexity of $\varphi(Q)$. Define the functional
\begin{align*}
  \varphi_0(Q) &= \varphi(Q)-\Delta(Q) \\
  &= \norm{AQ+B}_2^2 + \frac{\norm{\left(AQ+B\right)\left(EQ+F\right)}_1^2}{\sigma^2-\norm{EQ+F}_2^2}.
\end{align*}

\begin{lem}\label{lem:control_relaxation}
  Suppose $Q\in\Theta_{Q}$. Then $\varphi_0(Q)\leq \gamma$ if and only if there exists $(a,e)\in\Theta_\rho$ such that $\rho(a,e) \leq \gamma$ and
  \begin{align}
    a(\omega) &\geq \sqrt{G_{zu}^*G_{zu}G_{yv}G_{yv}^*}\left| M^2Q+\frac{M^2V}{N} \right |, \quad \omega\in [-\pi,\pi] \label{eq:a_larger_than_absolute}\\
    e(\omega) &\geq |EQ+F|, \quad \omega \in [-\pi,\pi].		\label{eq:e_larger_than_absolute}
  \end{align}
\end{lem}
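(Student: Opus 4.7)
The strategy is direct: exhibit $(a,e)$ via the pointwise moduli $a(\omega)=|(AQ+B)(e^{i\omega})|$ and $e(\omega)=|(EQ+F)(e^{i\omega})|$, and then match the three integrals defining $\rho(a,e)$ against the three norm quantities appearing in $\varphi_0(Q)$. The computation is driven by a single algebraic identity: factoring $AQ+B = G_{zu}G_{yv}M^2(Q+V/N)$ and taking moduli on $\T$ gives
\[|AQ+B| = \sqrt{G_{zu}^*G_{zu}\,G_{yv}G_{yv}^*}\,\bigl|M^2Q+M^2V/N\bigr|,\]
so the lower bound (\ref{eq:a_larger_than_absolute}) on $a$ is precisely $|AQ+B|$, and (\ref{eq:e_larger_than_absolute}) is simply $e\geq|EQ+F|$. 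Since $EQ+F\in\RHinf$ and, whenever $\varphi_0(Q)<\infty$, $AQ+B$ has no poles on $\T$, both $|AQ+B|$ and $|EQ+F|$ are continuous on $\T$ (the case $\varphi_0(Q)=\infty$ makes the equivalence vacuous).

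For the ``only if'' direction I would assume $\varphi_0(Q)\leq\gamma$ and take $a=|AQ+B|$, $e=|EQ+F|$. These are continuous and satisfy the pointwise bounds with equality. Membership $(a,e)\in\Theta_\rho$ follows from $Q\in\Theta_Q$, which gives $\frac{1}{2\pi}\intT e^2\,d\omega=\|EQ+F\|_2^2<\sigma^2$. Direct substitution then yields $\frac{1}{2\pi}\intT a^2\,d\omega=\|AQ+B\|_2^2$; $\frac{1}{2\pi}\intT ae\,d\omega=\|(AQ+B)(EQ+F)\|_1$ (using $|fg|=|f|\,|g|$ for scalar $f,g$); and $\frac{1}{2\pi}\intT e^2\,d\omega=\|EQ+F\|_2^2$, so $\rho(a,e)$ collapses exactly to $\varphi_0(Q)\leq\gamma$.

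For the ``if'' direction, given $(a,e)\in\Theta_\rho$ satisfying the bounds with $\rho(a,e)\leq\gamma$, I would exploit nonnegativity of $a,e,|AQ+B|,|EQ+F|$ to extract the three integral inequalities $\frac{1}{2\pi}\intT a^2\geq\|AQ+B\|_2^2$, $\frac{1}{2\pi}\intT ae\geq\|(AQ+B)(EQ+F)\|_1$, and $\frac{1}{2\pi}\intT e^2\geq\|EQ+F\|_2^2$. Monotonicity of the map $(p,q,r)\mapsto p+q^2/(\sigma^2-r)$ in each argument (over $p\geq 0$, $q\geq 0$, $r\in[0,\sigma^2)$) then delivers $\varphi_0(Q)\leq\rho(a,e)\leq\gamma$; positivity of $\sigma^2-\|EQ+F\|_2^2$ is guaranteed by $Q\in\Theta_Q$.

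The main delicate points I anticipate are (i) verifying that the chosen $a=|AQ+B|$ is actually continuous on $\T$, which relies on the hypothesis $\varphi_0(Q)<\infty$ to rule out poles of $AQ+B$ on the circle; and (ii) the book-keeping of the constants in the two denominators of $\rho$ and $\varphi_0$ so that the monotonicity argument delivers the correct direction of the inequality. This constant-matching, together with the restriction to $Q\in\Theta_Q$ that keeps the denominators positive, is what makes the equivalence tight in both directions.
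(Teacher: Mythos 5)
Your route is the same as the paper's: in the forward direction take $a$ and $e$ to be the pointwise moduli of $AQ+B$ and $EQ+F$ (the paper's choice $a=\sqrt{G_{zu}^*G_{zu}G_{yv}G_{yv}^*}\,|M^2Q+M^2V/N|$ is exactly your $|AQ+B|$), and in the converse direction combine the pointwise domination with monotonicity of the functional in its three integral arguments. One small remark: $AQ+B$ is matrix-valued ($n_z\times n_v$), so your parenthetical ``$|fg|=|f||g|$ for scalar $f,g$'' is not quite the right justification for $\frac{1}{2\pi}\intT ae\,d\omega=\norm{(AQ+B)(EQ+F)}_1$; what makes both the $\norm{\cdot}_1$ and $\norm{\cdot}_2$ integrands collapse to $\sqrt{G_{zu}^*G_{zu}G_{yv}G_{yv}^*}$ times scalar moduli is the rank-one structure $AQ+B=G_{zu}G_{yv}\cdot(\text{scalar})$.

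The substantive issue is precisely the ``constant book-keeping'' you deferred in point (ii), and it is not closed by the inequalities you list. As defined in the paper, $\rho$ has denominator $\sigma^2+1-\frac{1}{2\pi}\intT e^2\,d\omega$, whereas $\varphi_0$ has denominator $\sigma^2-\norm{EQ+F}_2^2$. Hence with $a=|AQ+B|$, $e=|EQ+F|$ the claimed exact collapse $\rho(a,e)=\varphi_0(Q)$ does not hold (the denominators differ by $1$), and in the converse direction your monotone map $p+q^2/(\sigma^2-r)$ is not the map that appears in $\rho$: applying monotonicity to the actual $\rho$ only gives $\rho(a,e)\geq\norm{AQ+B}_2^2+\norm{(AQ+B)(EQ+F)}_1^2/(\sigma^2+1-\norm{EQ+F}_2^2)$, which does not dominate $\varphi_0(Q)$. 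The forward direction survives the mismatch, since the larger denominator only decreases $\rho$, so $\rho(a,e)\leq\varphi_0(Q)\leq\gamma$ still holds; but the ``if'' direction is valid only after reading $\rho$ (and $\Theta_\rho$) with $\sigma^2$ in place of $\sigma^2+1$, which is what your $\sigma^2-r$ silently does. In fairness, the paper's own two-line proof makes the identical elision when it asserts $\rho(a,e)=\varphi_0(\cdot)$ for the canonical choice, so your argument is at the level of the published one; but having flagged the constant-matching as the delicate step, you should either work with the corrected denominator explicitly or show where the extra $+1$ is absorbed, since as written the converse inequality does not follow from the paper's stated definitions.
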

\begin{proof}
Suppose first that $\varphi_0(Q)\leq \gamma$. Let
  \begin{align*}
    a(\omega)&=\sqrt{G_{zu}^*G_{zu}G_{yv}G_{yv}^*}\left| M^2Q+\frac{M^2V}{N} \right | \\
     e(\omega)&=|EQ+F|
  \end{align*}
  and it follows that $(a,e)\in\Theta_\rho$ and $\rho(a,e)=\varphi_0(B,K)$. Conversely, suppose that $(a,e)\in\Theta_\rho$ satisfy (\ref{eq:a_larger_than_absolute}) and (\ref{eq:e_larger_than_absolute}) and that $\rho(a,e) \leq \gamma$. Then it follows from inspection of $\varphi_0(Q)$ and $\rho(a,e)$ that $\varphi_0(Q) \leq \rho(a,e) \leq \gamma$.
\end{proof}

Convexity can now be proved.
\begin{thm}\label{thm:approximation_convex}
  The problem of minimizing $\varphi(Q)$ over $\Theta_{Q}$ is convex.
\end{thm}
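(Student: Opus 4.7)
The plan has two independent pieces: show that the feasible set $\Theta_Q$ is convex, and show that the objective $\varphi$ is convex on it. The first is immediate: since $Q\mapsto\norm{EQ+F}_2^2$ is the composition of the affine map $Q\mapsto EQ+F$ with the convex functional $X\mapsto\norm{X}_2^2$, its strict sublevel set at level $\sigma^2$ is convex, and intersecting with the real linear subspace $\RHinf$ preserves convexity.

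For the objective I would decompose $\varphi(Q)=\varphi_0(Q)+\Delta(Q)$ exactly as in the definition of $\varphi_0$ just before the statement. The term $\Delta(Q)=\norm{L}_2^2+2\re\langle L,AQ+B\rangle$ is affine in $Q$, so the problem reduces to convexity of $\varphi_0$. The quadratic-over-linear structure of $\varphi_0$ makes a direct attack awkward, so I would instead characterise its sublevel sets through Lemma~\ref{lem:control_relaxation}. For any $\gamma\in\R$ that lemma gives
\begin{equation*}
\{Q\in\Theta_Q:\varphi_0(Q)\le\gamma\}=\pi_Q(\mathcal{F}_\gamma),
\end{equation*}
where $\pi_Q$ denotes projection onto the $Q$-coordinate and
\begin{equation*}
\mathcal{F}_\gamma=\{(Q,a,e):Q\in\Theta_Q,\ (a,e)\in\Theta_\rho,\ \rho(a,e)\le\gamma,\ \eqref{eq:a_larger_than_absolute},\ \eqref{eq:e_larger_than_absolute}\}.
\end{equation*}

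Convexity of $\mathcal{F}_\gamma$ would then be verified constraint by constraint in the joint variable $(Q,a,e)$. The factor $Q\in\Theta_Q$ is convex by the first paragraph; $(a,e)\in\Theta_\rho$ is convex because the pointwise continuity requirement is closed under convex combinations and $\tfrac{1}{2\pi}\intT e^2\,d\omega<\sigma^2+1$ is again a strict sublevel set of a convex functional; and $\rho(a,e)\le\gamma$ is convex by Lemma~\ref{lem:coding_feedb_relaxation_convex}. In constraint~\eqref{eq:a_larger_than_absolute}, the right-hand side equals, at each fixed $\omega$, a nonnegative weight times the modulus of an affine function of the complex scalar $Q(e^{i\omega})$, hence it is convex in $Q$; the inequality $a(\omega)\ge(\text{convex in }Q)$ therefore defines a convex set in $(Q,a)$ at each $\omega$, and intersecting over $\omega\in[-\pi,\pi]$ preserves convexity. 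The same argument handles~\eqref{eq:e_larger_than_absolute}. Assembling these pieces shows $\mathcal{F}_\gamma$ is convex, so its projection $\pi_Q(\mathcal{F}_\gamma)$ is convex, every sublevel set of $\varphi_0$ is convex, and therefore $\varphi_0$ and $\varphi$ are convex on $\Theta_Q$.

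The only delicate point I anticipate is the pointwise inequality constraints, and this is precisely why I would route the argument through $\rho$ rather than differentiate $\varphi_0$ directly: the relaxation decouples the magnitudes $a$ and $e$ from the detailed dependence of the integrands on $Q$, reducing the task to the already-proved convexity of $\rho$ together with the elementary fact that $z\mapsto|\alpha z+\beta|$ is convex on $\C$. Without that reformulation, the presence of the quotient $\norm{(AQ+B)(EQ+F)}_1^2/(\sigma^2-\norm{EQ+F}_2^2)$ in $\varphi_0$ would force one to manage simultaneously a quadratic numerator and a concave denominator, which is exactly the structure Lemma~\ref{lem:control_relaxation} is designed to linearise.
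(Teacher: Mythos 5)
Your setup is sound up to the last inference, but that inference is a genuine gap: showing that every sublevel set $\{Q\in\Theta_Q:\varphi_0(Q)\le\gamma\}$ is convex establishes only that $\varphi_0$ is \emph{quasiconvex}, not convex (compare $x\mapsto\sqrt{|x|}$, whose sublevel sets are all intervals). The jump ``every sublevel set is convex, therefore $\varphi_0$ is convex'' is exactly the false step, and it then contaminates the conclusion for $\varphi$: quasiconvexity is not preserved under addition of the affine term $\Delta(Q)$, so from your argument as written you cannot even conclude that $\varphi=\varphi_0+\Delta$ is quasiconvex, let alone convex, which is what Theorem~\ref{thm:approximation_convex} asserts and what the semidefinite-programming formulation in Section~\ref{sec:numerics} relies on.

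The repair is small and keeps your projection idea: treat $\gamma$ as a variable rather than fixing it. The set $\{(Q,a,e,\gamma):Q\in\Theta_Q,\ (a,e)\in\Theta_\rho,\ \rho(a,e)\le\gamma,\ (\ref{eq:a_larger_than_absolute}),\ (\ref{eq:e_larger_than_absolute})\}$ is convex jointly, because $\rho(a,e)\le\gamma$ is the epigraph constraint of the convex functional $\rho$ (Lemma~\ref{lem:coding_feedb_relaxation_convex}) and the remaining constraints are convex as you argued; its projection onto $(Q,\gamma)$ is, by Lemma~\ref{lem:control_relaxation}, precisely the epigraph of $\varphi_0$ restricted to $\Theta_Q$, and a projection of a convex set is convex, so $\varphi_0$ is convex and adding the affine $\Delta$ then gives convexity of $\varphi$. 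The paper avoids the issue differently: it takes $Q_1,Q_2\in\Theta_Q$, uses the forward direction of Lemma~\ref{lem:control_relaxation} to produce $(a_i,e_i)$ with $\rho(a_i,e_i)\le\varphi_0(Q_i)$, applies convexity of $\rho$ to the convex combination, and then uses the converse direction of Lemma~\ref{lem:control_relaxation} together with convexity of the pointwise constraints (\ref{eq:a_larger_than_absolute})--(\ref{eq:e_larger_than_absolute}) to bound $\varphi_0(\theta Q_1+(1-\theta)Q_2)$ directly, which proves convexity (not merely quasiconvexity) in one chain of inequalities. Either route is fine; yours needs the epigraph (joint-$\gamma$) version of the projection to close the gap.
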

\begin{proof}
  Suppose $Q_1,Q_2\in\Theta_{Q}$. Then by Lemma~\ref{lem:control_relaxation} there exists $(a_1,e_1)\in\Theta_\rho$ and $(a_2,e_2)\in\Theta_\rho$ such that $\rho(a_1,e_1) \leq \varphi_0(Q_1)$ and $\rho(a_2,e_2) \leq \varphi_0(Q_2)$. It thus holds for $0\leq\theta\leq 1$ that
  \begin{multline*}
    \theta \varphi_0(Q_1) + (1-\theta) \varphi_0(Q_2)
    \geq\theta\rho(a_1,e_1) + (1-\theta)\rho(a_2,e_2) \\
    \geq \rho\left(\theta a_1 + (1-\theta) a_2,\theta e_1+(1-\theta)e_2\right) \\
     \geq \varphi_0(\theta Q_1+(1-\theta) Q_2).
  \end{multline*}	
  The second inequality follows from Lemma~\ref{lem:coding_feedb_relaxation_convex}. The third inequality follows from Lemma \ref{lem:control_relaxation} and that the constraints (\ref{eq:a_larger_than_absolute}) and (\ref{eq:e_larger_than_absolute}) are convex. 
  It is thus proved that $\varphi_0(Q)$ is convex in $Q$. Then $\varphi(Q)$ is convex since $\Delta(Q)$ is convex. It is finally noted that $\Theta_{Q}$ is a convex set.
\end{proof}

\section{Numerical Solution}\label{sec:numerics}
By Lemma \ref{lem:control_relaxation}, minimizing $\varphi(Q)$ over $\Theta_Q$ is equivalent to minimizing $\rho(a,e)+\Delta(Q)$ over $\Theta_\rho \times \Theta_Q$ subject to (\ref{eq:a_larger_than_absolute}) and (\ref{eq:e_larger_than_absolute}). This problem is infinite-dimensional, so the integrals are discretized for numerical solution. It will now be shown how the discretized problem can be posed as a semidefinite program. 

Let $n \geq 2$ and define $\left\lbrace\omega_{k}\right \rbrace_{k=0}^{n-1}$, $\hat{a}$ and $\hat{e}$ as in the proof of Lemma \ref{lem:coding_feedb_relaxation_convex}.
Approximations of $\rho(a,e)$ and $\Delta(Q)$ with $n$ grid points are then given by
\begin{align*}
  \rho_n(\hat{a},\hat{e}) &= \frac{1}{n}\hat{a}^T\hat{a} + \frac{\left(\frac{1}{n} \hat{a}^T\hat{e}\right)^2}{\sigma^2+1-\frac{1}{n}\hat{e}^T\hat{e}} 
  \\
  \Delta_n(Q) &= \norm{L}_2^2 + \frac{2}{n} \re \sum_{k=1}^n \tr \! \left .\left( L^* (A Q + B) \right)\right |_{z=e^{i\omega_k}}. 
\end{align*}
The accuracy of this approximation clearly depends on the number of grid points $n$. 
When implementing the minimization program, $Q$ is parametrized using a finite basis representation. The accuracy of the approximation obviously depends on this representation as well.

The denominator of $\rho_n(\hat{a},\hat{e})$ is positive for sufficiently large $n$ and $\rho_n(\hat{a},\hat{e}) + \Delta_n(Q)$ can be written as a Schur complement. It follows that \hbox{$\rho_n(\hat{a},\hat{e}) + \Delta_n(Q) \leq \gamma$} if and only if 
\begin{equation*}
  \begin{bmatrix}
    \frac{1}{n}\hat{e}^T\hat{e}-\sigma^2-1 & \frac{1}{n}\hat{a}^T\hat{e} \\
    \frac{1}{n}\hat{e}^T\hat{a} &  \frac{1}{n}\hat{a}^T\hat{a} + \Delta_n(Q) - \gamma
  \end{bmatrix} \preceq 0,
\end{equation*}
or, equivalently,
\begin{equation*}
  \begin{bmatrix}
    n(\sigma^2+1) & 0 \\
    0 & n\gamma-n\Delta_n(Q)
  \end{bmatrix} -
  \begin{bmatrix}
    \hat{e} & \hat{a}
  \end{bmatrix}^T I
  \begin{bmatrix}
    \hat{e} & \hat{a}
  \end{bmatrix} \succeq 0.
\end{equation*}
Noting that the left hand side of the last inequality is also a Schur complement, it follows that it is equivalent to
\begin{equation}\label{eq:control_LMI}
  \begin{bmatrix}
    I & \hat{e} & \hat{a} \\
    \hat{e}^T & n(\sigma^2+1) & 0 \\
    \hat{a}^T & 0 & n\gamma-n\Delta_n(Q)
  \end{bmatrix} \succeq 0.
\end{equation}

Let $g_k = \sqrt{G_{zu}(e^{i\omega_k})^*G_{zu}(e^{i\omega_k})G_{yv}(e^{i\omega_k})G_{yv}(e^{i\omega_k})^*}$.
The constraints can then be approximated by
\begin{align}
  a(\omega_k) &\geq g_k \left|M(e^{i\omega_k})^2 \right | \left |Q(e^{i\omega_k})+\frac{V(e^{i\omega_k})}{N(e^{i\omega_k})} \right|,\ k=1\ldots n \\
  e(\omega_k) &\geq \left|E(e^{i\omega_k})Q(e^{i\omega_k})+F(e^{i\omega_k})\right|,\quad k=1\ldots n \\
  \sigma^2+1 &> \frac{1}{n}\sum_{k=1}^n{e(\omega_k)^2}. \label{eq:control_lastconstraint}
\end{align}
Minimizing $\gamma$ subject to (\ref{eq:control_LMI})--(\ref{eq:control_lastconstraint}) is a semidefinite program.

A procedure for numerical solution will now be outlined.
\begin{enumerate}
\item Determine $N,M,U,V\in\RHinf$ by a coprime factorization of $G_{yu}$ and calculate $A,B,E,F$ and $L$.
\item Choose $n$ large, determine the grid points $\omega_k$, $k=1\ldots n$ and solve the optimization problem of minimizing $\gamma$ subject to (\ref{eq:control_LMI})--(\ref{eq:control_lastconstraint}). The transfer function $Q$ is parametrized with a finite basis representation, for example as an FIR filter. 
  If the problem is infeasible it could mean that a larger $\sigma^2$ is needed
  to stabilize the plant. This can be checked analytically using the
  condition in \cite{braslavsky07}. 
  If $\sigma^2$ is sufficiently large according to this condition, the problem could still become
  infeasible if $n$ is too small or $Q$ is too coarsely parametrized.
\item If $NQ+V$ has zeros on the unit circle, determine a small perturbation $\hat{Q}$ of $Q$ as outlined by Lemma~\ref{lemfnfQhat}.
\item Determine $K$ from (\ref{eqfnfKFromQhat}).
\item	Use a finite basis approximation $A(\omega)$ of $CC^*$, for example the para\-metrization 
  \begin{equation}\label{eq:coding_spectrum_parametrization}
    A(\omega) =A_0 + \sum_{k=1}^{N_c} A_k \left(e^{ki\omega} + e^{-ki\omega}\right),
  \end{equation}
and fit $A(\omega)$ to the right hand side of (\ref{eqfnfOptimalC}), for example by mini\-mizing the mean squared deviation.
\item Perform a spectral factorization of $A(\omega)$, choosing $C$ as the stable and minimum phase spectral factor.
\item Let $D=KC^{-1}$.
\end{enumerate}

\subsection{Example}

\begin{figure}[b]
  \psfrag{C}[]{$C$}
  \psfrag{D}[]{$D$}
  \psfrag{G}[]{$P$}
  \psfrag{v}[]{$w$}
  \psfrag{y}[]{$y$}
  \psfrag{r}[]{$r$}
  \psfrag{t}[]{$t$}
  \psfrag{n}[]{$n$}
  \psfrag{u}[]{$u$}
  \begin{center}
\ifthenelse{\boolean{draft}}{\includegraphics[width=0.45\hsize]{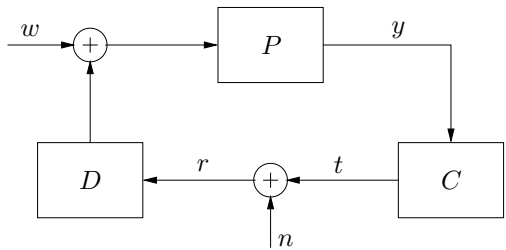}}{\includegraphics[width=0.75\hsize]{loop_SISO}}
    \caption{Control of a SISO plant over an AWGN channel.} 
    \label{fig:loop_SISO}
  \end{center}
\end{figure}

Consider the system in Fig.~\ref{fig:loop_SISO}. A SISO plant is controlled over an AWGN channel. The SISO plant represents a special case where 
\begin{align*}
  G(z)=\begin{bmatrix} G_{zv}(z) & G_{zu}(z) \\ G_{yv}(z) & G_{yu}(z) \end{bmatrix} =
  \begin{bmatrix}
    P(z) & P(z) \\P(z) & P(z)
  \end{bmatrix}.
\end{align*}

Let the plant be $P(z)=1/(z(z-2))$. It has one unstable pole and a one-sample time delay. Using the stabilizability condition (\ref{eq:SNR_framework_stabilizability}), it is determined that stabilization is possible for $\sigma^2 > 12$. (We have $\eta=0$, since there are no non-minimum phase zeros, and $\delta=9$, because of the location of the unstable pole and the relative degree, which is $2$. For details, see \cite{braslavsky07}). 

  A controller was determined for various values of $\sigma^2$, using the algorithm outlined above. The optimization was performed in Matlab, using the toolboxes Yalmip \cite{yalmip} and SeDuMi \cite{sedumi}. In the optimization program, $n=629$ grid points were used and $Q$ was parametrized as an FIR filter with length~$20$.
  The plant output variance is plotted in Fig.~\ref{fig:performance}
  for a number of different~$\sigma^2$. It can be seen that the variance grows unbounded as $\sigma^2$ approaches $12$ and the feedback system comes closer to instability. This seems to be in agreement with the performance bound given in \cite{freudenberg10}.
  \begin{figure}[tb]
    \begin{center}
\ifthenelse{\boolean{draft}}{\includegraphics[width=0.5\hsize]{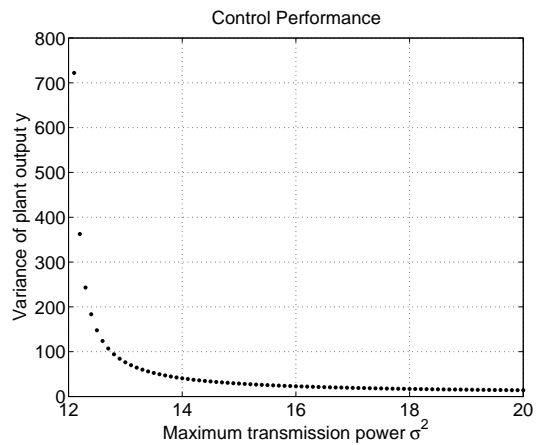}}{\includegraphics[width=0.8\hsize]{cost_by_sigma2}}
      \caption{Minimum variance of the plant output $y$ as a function of the
        SNR (or maximum allowed transmission power) $\sigma^2$, for the
        plant \mbox{$G=1/(z(z-2))$}. The variance grows unbounded as $\sigma^2$
        approaches the lower limit for stabilization.}
      \label{fig:performance}
    \end{center}
  \end{figure}

\section{Conclusion}\label{sec:conclusion}
This paper has considered a special class of decentralized control problems where the controller is split in two parts that are separated by a noisy communication channel with an SNR constraint. It has been shown that an optimal linear design can be obtained with arbitrary accuracy by solving a convex optimization problem and performing a spectral factorization.

The results in this paper can be viewed as a generalization of some results pertaining to a communication problem that can be obtained by considering the open-loop version of the control problem. %todo: ref? nämna minnesresultatet?

As mentioned in Section \ref{sec:introduction}, the problem in this paper has previously been considered in the case with an AWGN channel with feedback \cite{silva2010c}, where a similar result is obtained using a slightly different technique. A disadvantage with that result is, though, that it requires the controller to be over-parametrized with four degrees of freedom. The technique used in this paper has been applied to that case as well in \cite{joh11phd}, giving a solution that does not require an over-parametrization.

Objects for further research include an extension to handle MIMO channels or plants with more than one controller input or measurement signal. Of course it would also be of interest to know if non-LTI controllers could provide better performance or require lower SNR levels for stabilization when the channel noise is colored.

\appendix
\begin{lem}\label{lem:product_with_inverse_in_Hp}
  Suppose that $Y \in \Smirnov$ is square and outer, $X \in \Smirnov$, and that \mbox{$Y^{-1}X \in \Lebesgue{p}$}. Then $Y^{-1}X \in \Hardy{p}$.
\end{lem}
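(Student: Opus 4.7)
The plan is to invoke Smirnov's theorem: a function in $\Smirnov$ whose boundary values lie in $\Lebesgue{p}$ automatically belongs to $\Hardy{p}$ (equivalently, $\Smirnov\cap\Lebesgue{p}=\Hardy{p}$). Thus the entire task reduces to verifying that $Y^{-1}X$ belongs to the (matrix) Smirnov class, after which the $\Lebesgue{p}$ hypothesis and Smirnov's theorem finish the argument.

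I would first reduce to an entrywise statement via the classical adjugate formula. Since $Y$ is square and outer in $\Smirnov$, the scalar determinant $\det Y$ is outer in scalar $\Smirnov$, while each entry of $\operatorname{adj} Y$ is a polynomial in the entries of $Y$ and therefore lies in scalar $\Smirnov$ (which is an algebra under pointwise addition and multiplication). Writing
\[
Y^{-1}X \;=\; \frac{(\operatorname{adj} Y)\,X}{\det Y},
\]
the entries of the numerator are finite sums of products of scalar Smirnov functions, so they too lie in $\Smirnov$.

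Next I would invoke the standard closure property: if $f\in\Smirnov$ and $g$ is outer in $\Smirnov$, then $f/g\in\Smirnov$. Indeed, writing $f=f_1/f_2$ with $f_1,f_2\in\Hinf$ and $f_2$ outer gives $f/g=f_1/(f_2g)$ with $f_1\in\Hinf$ and $f_2g$ outer in $\Hinf$. Applying this entrywise to the representation above shows that every entry of $Y^{-1}X$ is in scalar $\Smirnov$. Since $Y^{-1}X\in\Lebesgue{p}$ forces each entry to have boundary values in scalar $\Lebesgue{p}$, Smirnov's theorem gives each entry in scalar $\Hardy{p}$, so $Y^{-1}X\in\Hardy{p}$.

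The only subtle point is the matrix-to-scalar reduction; once it is carried out via the adjugate, everything rests on the scalar Smirnov theorem and the algebra/division-by-outers closure of $\Smirnov$. The outer assumption on $Y$ is essential here, since without it $Y^{-1}$ could introduce a singular inner denominator that would knock $Y^{-1}X$ out of $\Smirnov$ and allow the $\Hardy{p}$ conclusion to fail even when the boundary values are in $\Lebesgue{p}$.
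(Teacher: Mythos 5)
Your proposal is correct and takes essentially the same route as the paper — establish that $Y^{-1}X$ lies in the Smirnov class and then conclude from $\Smirnov \cap \Lebesgue{p} = \Hardy{p}$ — the only difference being that you rederive $Y^{-1}\in\Smirnov$ explicitly via the adjugate and the outerness of $\det Y$, where the paper simply cites Theorem 10 of Inouye. One harmless slip in your division-by-outer step: $f_2 g$ need not lie in $\Hinf$ when $g\in\Smirnov$ is unbounded, so write $g=g_1/g_2$ with $g_1,g_2\in\Hinf$ and $g_1$ outer, giving $f/g=(f_1 g_2)/(f_2 g_1)\in\Smirnov$.
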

\begin{proof}
  $Y^{-1} \in \Smirnov$ by Theorem 10 in \cite{Inouye}. It is easy to verify that the product of two $\Smirnov$ functions is $\Smirnov$. The result follows from $\Lebesgue{p} \cap \Smirnov = \Hardy{p}$ \cite{Garnett}.
\end{proof}

\begin{proof}[Proof of Lemma \ref{lemfnfQhat}]
  The proof is based on construction of $\hat{Q}$ through a perturbation of $Q$. Take $Q \in \Theta_Q$ and let 
  \begin{equation*}
    K=\frac{MQ-U}{NQ+V}.
  \end{equation*}
  If $K\in\RLone$ then let $\hat{Q}=Q$ and the construction is complete. 
  Suppose instead that $K$ has at least one pole on $\T$. Since $MQ-U\in\RHinf$, $z$ is a pole of $K$ if and only if 
  \begin{equation}\label{eq:control_Qdenom_zero}
    N(z)Q(z)+V(z)=0.
  \end{equation}
  Moreover, suppose that (\ref{eq:control_Qdenom_zero}) holds and that $N(z) = 0$. Then it follows from the Bezout identity that $V(z) \neq 0$, which is a contradiction. Thus if $NQ+V$ has a zero at $z$ then $N(z) \neq 0$. 

  Suppose now that $NQ+V$ has a zero at $z_0\in\mathbb{T}$ and that $z_0\notin \R$ (the case when $z_0\in\R$ is discussed later). Let 
  \begin{equation*}
    \hat{Q}=Q+\lambda_0+\lambda_1z^{-1}, \quad \lambda_0,\lambda_1 \in \R.
  \end{equation*}
  Then $\norm{E\hat{Q}+F}_2 < \sigma^2+1$ if $|\lambda_0|+|\lambda_1|<\delta_\lambda$ for small enough $\delta_\lambda$. 

  The coefficients $\lambda_0,\lambda_1$ will be chosen so that the zero at $z_0$ is perturbed away from $\T$. It must also be made sure that none of the other zeros can reach $\T$ under the same perturbation. For this reason, define the set of zeros not on the unit circle, 
  \begin{equation*}
    \Omega = \lbrace z:z\notin \T, N(z)Q(z)+V(z)=0 \rbrace,
  \end{equation*}
  and the smallest distance from that set to the unit circle,
  \begin{equation*}
    r = \inf_{z_1 \in \Omega, z_2 \in \T} \left | z_1-z_2 \right |,
  \end{equation*}
  where $r>0$ since $\Omega$ has a finite number of elements. The location of the zeros of $N\hat{Q}+V$ depend continuously on $(\lambda_0,\lambda_1)$. Thus, there exists $\delta_r>0$ such that if $|\lambda_0|+|\lambda_1|<\delta_r$ then all zeros are displaced strictly less than $r$.

  Introduce the function
  \begin{equation*}
    X(z,\lambda_0,\lambda_1) = N\hat{Q}+V = NQ+V+N(\lambda_0+\lambda_1z^{-1}).
  \end{equation*}
  Then
  \begin{equation*}
    \det \begin{bmatrix}
      \re \dfrac{\partial X}{\partial \lambda_0} & \re \dfrac{\partial X}{\partial \lambda_1} \vspace{1mm} 
      \\
      \im \dfrac{\partial X}{\partial \lambda_0} & \im \dfrac{\partial X}{\partial \lambda_1} 
    \end{bmatrix} = \det
    \begin{bmatrix}
      \re N & \re Nz^{-1} \\
      \im N & \im Nz^{-1}
    \end{bmatrix} 
  \end{equation*} is non-zero at $z=z_0$
  since $N(z_0)\neq 0$ and \mbox{$z_0\in\T\setminus\R$}.
  Then, by the implicit function theorem, there is a differentiable mapping $z \mapsto (\lambda_0,\lambda_1)$ defined in a neighborhood of $z_0$, such that
  \begin{multline*}
    N(z)\hat{Q}(z)+V(z) \\ = 
    N(z)Q(z)+V(z)+N(z)(\lambda_0(z)+\lambda_1(z)z^{-1}) = 0.
  \end{multline*}
  This means that a new location $z$ can be determined for the zero, and the mapping gives the corresponding $\lambda_0,\lambda_1$.

  Take $\varepsilon>0$. Since $\varphi(Q)$ is continuous
  there exists $\delta_Q>0$ such that
  \begin{equation*}
    \norm{\hat{Q}-Q}_\infty < \delta_Q \Rightarrow \left|\varphi(\hat{Q}) - \varphi(Q) \right | < \varepsilon.
  \end{equation*}
  Continuity of the mapping from $z$ to $(\lambda_0,\lambda_1)$ implies that there exists $\delta_z>0$ such that 
  \begin{equation*}
    \left |z-z_0\right |<\delta_z \Rightarrow \left | \lambda_0(z) \right | + \left | \lambda_1(z) \right | < \min\lbrace\delta_Q,\delta_\lambda,\delta_r\rbrace.
  \end{equation*}
  Now pick $z\notin\T$ such that $|z-z_0 |<\delta_z$ and the mapping to $\lambda_0, \lambda_1$ is defined. Then  
  \begin{equation*}
    \norm{\hat{Q}-Q}_\infty \leq \left | \lambda_0(z) \right | + \left | \lambda_1(z) \right | < \min\lbrace\delta_Q,\delta_\lambda,\delta_r\rbrace,
  \end{equation*}
  which implies that \[\norm{E\hat{Q}+F}_2 < \sigma^2+1, \quad |\varphi(\hat{Q}) - \varphi(Q) | < \varepsilon,\] and that there are no new zeros on $\T$. Since $z\notin \T$ it follows that $N\hat{Q}+V$ has at least one zero less than $NQ+V$ on $\T$. 

  If $z_0$ is real, then define instead
  \begin{equation*}
    \hat{Q} = Q+\lambda_0, \quad \lambda_0 \in \R
  \end{equation*}
  and determine $\lambda_0$ analogously. Note, however, that the zero must be kept on the real axis.

  If $\hat{Q}$ is such that $N\hat{Q}+V$ has zeros on $\T$, the procedure may be repeated again, with $\varepsilon$ appropriately chosen, until there are no such zeros. Thus, for every $Q \in\Theta_Q$ and $\varepsilon>0$ it is possible to construct $\hat{Q}$ such that $N\hat{Q}+V$ has no zeros on $\T$, $|\varphi(\hat{Q}) - \varphi(Q) | < \varepsilon$ and $\norm{E\hat{Q}+F}_2 < \sigma^2+1$. 
\end{proof}

\section*{Acknowledgment}
The authors would like to thank Eduardo Silva (UTFSM) and Alexandre Megretski (MIT) for helpful comments and technical discussions.

\ifCLASSOPTIONcaptionsoff
  \newpage
\fi

%\bibliographystyle{IEEEtran}
%\bibliography{IEEEabrv,erik_bib}

\begin{thebibliography}{10}
\providecommand{\url}[1]{#1}
\csname url@rmstyle\endcsname
\providecommand{\newblock}{\relax}
\providecommand{\bibinfo}[2]{#2}
\providecommand\BIBentrySTDinterwordspacing{\spaceskip=0pt\relax}
\providecommand\BIBentryALTinterwordstretchfactor{4}
\providecommand\BIBentryALTinterwordspacing{\spaceskip=\fontdimen2\font plus
\BIBentryALTinterwordstretchfactor\fontdimen3\font minus
  \fontdimen4\font\relax}
\providecommand\BIBforeignlanguage[2]{{%
\expandafter\ifx\csname l@#1\endcsname\relax
\typeout{** WARNING: IEEEtran.bst: No hyphenation pattern has been}%
\typeout{** loaded for the language `#1'. Using the pattern for}%
\typeout{** the default language instead.}%
\else
\language=\csname l@#1\endcsname
\fi
#2}}

\bibitem{Nair2003}
G.~N. Nair and R.~J. Evans, ``Exponential stabilisability of finite-dimensional
  linear systems with limited data rates,'' \emph{Automatica}, vol.~39, no.~4,
  pp. 585--593, 2003.

\bibitem{tatikonda04a}
S.~Tatikonda and S.~Mitter, ``Control under communication constraints,''
  \emph{IEEE Transactions on Automatic Control}, vol.~49, no.~7, pp.
  1056--1068, July 2004.

\bibitem{Nair2004}
G.~N. Nair and R.~J. Evans, ``Stabilizability of stochastic linear systems with
  finite feedback data rates,'' \emph{SIAM J. Control Optim.}, vol.~43, pp.
  413--436, Feb. 2004.

\bibitem{Matveev2004}
A.~Matveev and A.~Savkin, ``An analogue of {S}hannon information theory for
  networked control systems. stabilization via a noisy discrete channel,'' in
  \emph{Proc. IEEE Conference on Decision and Control}, vol.~4, 2004, pp.
  4491--4496 Vol.4.

\bibitem{sahai06a}
A.~Sahai and S.~Mitter, ``The necessity and sufficiency of anytime capacity for
  stabilization of a linear system over a noisy communication link --- {P}art
  {I}: Scalar systems,'' \emph{IEEE Transactions on Information Theory},
  vol.~52, no.~8, pp. 3369--3395, Aug. 2006.

\bibitem{braslavsky07}
J.~Braslavsky, R.~Middleton, and J.~Freudenberg, ``Feedback stabilization over
  signal-to-noise ratio constrained channels,'' \emph{IEEE Transactions on
  Automatic Control}, vol.~52, no.~8, pp. 1391--1403, Aug. 2007.

\bibitem{Silva2009}
E.~I. Silva, G.~C. Goodwin, and D.~E. Quevedo, ``On the design of control
  systems over unreliable channels,'' in \emph{Proc. European Control
  Conference}, Budapest, Hungary, 2009.

\bibitem{Silva2010b}
E.~I. Silva, M.~S. Derpich, and J.~Ostergaard, ``A framework for control system
  design subject to average data-rate constraints,'' \emph{IEEE Transactions on
  Automatic Control}, vol.~PP, no.~99, p.~1, 2010.

\bibitem{silva2010c}
E.~Silva, M.~Derpich, and J.~\O{}stergaard, ``On the minimal average data-rate
  that guarantees a given closed loop performance level,'' in \emph{Proceedings
  of the 2nd {IFAC} Workshop on Distributed Estimation and Control in Networked
  Systems (NecSys)}, Annecy, France, July 2010.

\bibitem{SilvaSNRApproach}
E.~Silva, J.~Ag{\"u}ero, G.~Goodwin, K.~Lau, and M.~Wang, \emph{The SNR
  approach to Networked Control}, 2nd~ed.\hskip 1em plus 0.5em minus
  0.4em\relax CRC Press, 2011, ch.~25.

\bibitem{silva2010a}
E.~I. Silva, G.~C. Goodwin, and D.~E. Quevedo, ``Control system design subject
  to {SNR} constraints,'' \emph{Automatica}, vol.~46, no.~2, pp. 428--436,
  2010.

\bibitem{Rojas2010}
A.~J. Rojas, J.~H. Braslavsky, and R.~H. Middleton, ``Fundamental limitations
  in control over a communication channel,'' \emph{Automatica}, vol.~44,
  no.~12, pp. 3147--3151, 2008.

\bibitem{freudenberg10}
J.~Freudenberg, R.~Middleton, and V.~Solo, ``Stabilization and disturbance
  attenuation over a {G}aussian communication channel,'' \emph{IEEE
  Transactions on Automatic Control}, vol.~55, no.~3, pp. 795--799, Mar. 2010.

\bibitem{middleton2009}
R.~Middleton, A.~Rojas, J.~Freudenberg, and J.~Braslavsky, ``Feedback
  stabilization over a first order moving average {G}aussian noise channel,''
  \emph{Automatic Control, IEEE Transactions on}, vol.~54, no.~1, pp. 163
  --167, jan. 2009.

\bibitem{bansal89}
R.~Bansal and T.~Basar, ``Simultaneous design of measurement and control
  strategies for stochastic systems with feedback,'' \emph{Automatica},
  vol.~25, no.~5, pp. 679--694, 1989.

\bibitem{freudenberg07}
J.~Freudenberg, R.~Middleton, and J.~Braslavsky, ``Stabilization with
  disturbance attenuation over a gaussian channel,'' in \emph{Proc. IEEE
  Conference on Decision and Control}, dec. 2007, pp. 3958--3963.

\bibitem{goodwin08b}
G.~C. Goodwin, D.~E. Quevedo, and E.~I. Silva, ``Architectures and coder design
  for networked control systems,'' \emph{Automatica}, vol.~44, no.~1, pp.
  248--257, 2008.

\bibitem{breun2008}
P.~Breun and W.~Utschick, ``On transmitter design in power constrained {LQG}
  control,'' in \emph{Proc. American Control Conference}, June 2008, pp.
  4979--4984.

\bibitem{li2009}
Y.~Li, E.~Tuncel, J.~Chen, and W.~Su, ``Optimal tracking performance of
  discrete-time systems over an additive white noise channel,'' in \emph{Proc.
  IEEE Conference on Decision and Control, held jointly with the Chinese
  Control Conference.}, Dec. 2009, pp. 2070--2075.

\bibitem{pulgar2011}
S.~Pulgar, E.~Silva, and M.~Salgado, ``Optimal state-feedback design for {MIMO}
  systems subject to multiple {SNR} constraints,'' in \emph{Proc. 18th IFAC
  World Congress}, Milano, Italy, Aug. 2011.

\bibitem{freudenberg08}
J.~Freudenberg, R.~Middleton, and J.~Braslavsky, ``Minimum variance control
  over a gaussian communication channel,'' in \emph{Proc. American Control
  Conference}, 2008, pp. 2625--2630.

\bibitem{freudenberg09}
J.~S. Freudenberg and R.~H. Middleton, ``Stabilization and performance over a
  gaussian communication channel for a plant with time delay,'' in \emph{Proc.
  American Control Conference}, 2009, pp. 2148--2153.

\bibitem{silva2011}
E.~I. Silva, M.~S. Derpich, and J.~Ostergaard, ``An achievable data-rate region
  subject to a stationary performance constraint for {LTI} plants,'' \emph{IEEE
  Transactions on Automatic Control}, vol.~56, no.~8, pp. 1968--1973, Aug.
  2011.

\bibitem{joh12IT}
\BIBentryALTinterwordspacing
E.~Johannesson, A.~Rantzer, B.~Bernhardsson, and A.~Ghulchak, ``Optimal linear
  joint source-channel coding with delay constraint,'' \emph{IEEE Transactions
  on Information Theory}, 2012, submitted for publication. [Online]. Available:
  \url{http://arxiv.org/abs/1203.6318v1}
\BIBentrySTDinterwordspacing

\bibitem{joh11phd}
E.~Johannesson, ``Control and communication with signal-to-noise ratio
  constraints,'' Ph.D. dissertation, Department of Automatic Control, Lund
  University, Sweden, Oct. 2011.

\bibitem{Garnett}
J.~Garnett, \emph{Bounded analytic functions}, revised 1st~ed.\hskip 1em plus
  0.5em minus 0.4em\relax New York, NY, USA: Springer, 2007.

\bibitem{rudin86real}
W.~Rudin, \emph{Real and Complex Analysis}, 3rd~ed.\hskip 1em plus 0.5em minus
  0.4em\relax McGraw-Hill Science/Engineering/Math, May 1986.

\bibitem{Inouye}
Y.~Inouye, ``Linear systems with transfer functions of bounded type: Canonical
  factorization,'' \emph{IEEE Transactions on Circuits and Systems}, vol.~33,
  no.~6, pp. 581--589, June 1986.

\bibitem{zhou96}
K.~Zhou, J.~C. Doyle, and K.~Glover, \emph{Robust and optimal control}.\hskip
  1em plus 0.5em minus 0.4em\relax Upper Saddle River, NJ, USA: Prentice-Hall,
  Inc., 1996.

\bibitem{Wiener1959}
N.~Wiener and E.~Akutowicz, ``A factorization of positive {H}ermitian
  matrices,'' \emph{Indiana Univ. Math. J.}, vol.~8, pp. 111--120, 1959.

\bibitem{Rojas2009}
A.~J. Rojas, ``Signal-to-noise ratio performance limitations for input
  disturbance rejection in output feedback control,'' \emph{Systems \& Control
  Letters}, vol.~58, no.~5, pp. 353--358, 2009.

\bibitem{Boyd2004}
S.~Boyd and L.~Vandenberghe, \emph{Convex Optimization}.\hskip 1em plus 0.5em
  minus 0.4em\relax New York, NY, USA: Cambridge University Press, 2004.

\bibitem{derpich2011}
M.~S. Derpich and J.~{\O}stergaard, ``Improved upper bounds to the causal
  quadratic rate-distortion function for gaussian stationary sources,''
  \emph{IEEE Transactions on Information Theory}, accepted for publication.

\bibitem{yalmip}
J.~L\"{o}fberg, ``Yalmip : A toolbox for modeling and optimization in
  {MATLAB},'' in \emph{Proceedings of the CACSD Conference}, Taipei, Taiwan,
  2004.

\bibitem{sedumi}
J.~Sturm, ``Using {SeDuMi} 1.02, a {MATLAB} toolbox for optimization over
  symmetric cones,'' \emph{Optimization Methods and Software}, vol. 11--12, pp.
  625--653, 1999, version 1.3 available from
  {\texttt{http://sedumi.ie.lehigh.edu/}}.

\end{thebibliography}

\end{document}